\title{Polynomial Space Randomness in Analysis}
\author{Xiang Huang and D. M. Stull\\
		\small{Iowa State University}\\
		\small{Department of Computer Science}\\
		\small{Ames, IA 50011 USA}\\
		\small{$\{$huangx, dstull$\}$@iastate.edu}}
\date{}
\newtheorem{theorem}{Theorem}
\newtheorem{lemma}{Lemma}
\newtheorem*{maintheorem}{Main Theorem}
\newtheorem*{lebesgue}{Lebesgue Differentiation Theorem}
\newtheorem*{cheby}{Chebyshev's Inequality}
\newtheorem*{hardy}{Hardy/Littlewood Inequality}
\newtheorem*{BMNtheorem}{Theorem}
\theoremstyle{definition}
\newtheorem*{definition}{Definition}
\begin{document}

\maketitle
\begin{abstract}
We study the interaction between polynomial space randomness and a fundamental result of analysis, the Lebesgue differentiation theorem. We generalize Ko's framework for polynomial space computability in $\mathbb{R}^n$ to define \textit{weakly pspace-random} points, a new variant of polynomial space randomness. We show that the Lebesgue differentiation theorem characterizes weakly pspace random points. That is, a point $x$ is weakly pspace random if and only if the Lebesgue differentiation theorem holds for a point $x$ for every pspace $L_1$-computable function.
\end{abstract}

\section{Introduction}
The theory of computing allows for a meaningful definition of an individual point of Euclidean space being ``random". Classically, such a notion would seem paradoxical, as any singleton set (indeed, any countably infinite set) has measure zero. Martin-L\"of used computability to give the first mathematically robust definition of a point being random \cite{Lof}. Since Martin-L\"of's original definition, many notions of randomness have been introduced. In addition to Martin-L\"of randomness, two of the most prominent variants are Schnorr randomness and computable randomness \cite{DH}. By developing a theory of resource-bounded measure, Lutz initiated the study of resource-bounded randomness \cite{Lutz92, Lutz98}. This allowed for research in algorithmic randomness to extend to resource-bounded computation \cite{Wang}.

Recently, research in algorithmic randomness has used computable analysis to study the connection between randomness and classical analysis \cite{BDHMS, FGMN, FKN, GT, Miyabe, MNZ, Rute}. With the rise of measure theory, many fundamental theorems of analysis have been ``almost everywhere" results. Theorems of this type state that a certain property holds for almost every point; i.e., the set of points that does not satisfy the property is of measure zero. However, almost everywhere theorems typically give no information about which points satisfy the stated property. By adding computability restrictions, tools from algorithmic randomness are able to strengthen a theorem from a property simply holding almost everywhere, to one that holds for all random points. For example, an important classical result of analysis is Lebesgue's theorem on nondecreasing functions. Lebesgue showed that every nondecreasing continuous function $f:[0, 1] \rightarrow \mathbb{R}$ is differentiable almost everywhere. Brattka, Miller and Nies characterized computable randomness using Lebesgue's theorem by proving the following result \cite{BMN}.
\begin{BMNtheorem}
Let $z \in [0, 1]$. Then $z$ is computably random if and only if $f^\prime(z)$ exists for every nondecreasing computable function $f:[0, 1] \rightarrow \mathbb{R}$.
\end{BMNtheorem}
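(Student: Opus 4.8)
The plan is to exploit the classical correspondence between nondecreasing functions and martingales (equivalently, finite Borel measures on $[0,1]$), under which differentiability at $z$ is governed by the behaviour of an associated martingale along the dyadic filtration. Recall that $z$ is computably random precisely when no computable martingale $d:\{0,1\}^*\to[0,\infty)$ obeying $2d(\sigma)=d(\sigma 0)+d(\sigma 1)$ succeeds on the binary expansion of $z$, i.e.\ has $\limsup_n d(z\restriction n)=\infty$. To a nondecreasing $f$ I attach the martingale $d_f(\sigma)=2^{|\sigma|}\bigl(f(r_\sigma)-f(l_\sigma)\bigr)$, where $I_\sigma=[l_\sigma,r_\sigma]$ is the dyadic interval coded by $\sigma$; the martingale identity is exactly the additivity of the increments of $f$, and $d_f(z\restriction n)$ is the dyadic difference quotient of $f$ across $I_{z\restriction n}$. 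If $f$ is computable then $d_f$ is a computable martingale, and in the other direction the measure $\mu$ determined by a computable martingale via $\mu(I_\sigma)=2^{-|\sigma|}d(\sigma)$ can be turned back into a nondecreasing function by integration.

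For the reverse implication I would argue the contrapositive. Given that $z$ is not computably random, fix a computable martingale $d$ with $\limsup_n d(z\restriction n)=\infty$ and pass to its measure $\mu$. Integrating $\mu$---concretely, interpolating $f_\mu(x)=\mu([0,x])$ linearly at each dyadic level and taking the uniform limit, which I must check produces a continuous, hence legitimately computable, function---yields a nondecreasing computable $f$ whose dyadic difference quotient across $I_{z\restriction n}$ is $d(z\restriction n)$. Since these quotients are unbounded, $\overline{D}f(z)=\infty$, so $f'(z)$ fails to exist, giving the required counterexample.

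The forward implication is the substantial direction, and I would again prove the contrapositive: if $f$ is nondecreasing and computable but $f'(z)$ does not exist, I must build a computable martingale that succeeds on $z$. Non-existence of the derivative means either that the upper derivative $\overline{D}f(z)$ is infinite, or that $\underline{D}f(z)<\overline{D}f(z)$ with both finite, i.e.\ the difference quotients about $z$ oscillate. The strategy is to reduce each case to $d_f$, or to a martingale built on a shifted dyadic grid. For the reduction I would use the elementary fact that, because $f$ is nondecreasing, the quotient of $f$ across any short interval $I\ni z$ is comparable, up to a bounded factor, to the quotient across a dyadic interval (from the standard grid or from a suitable shifted grid) of comparable length containing $I$; a pigeonhole over the finitely many grids then shows that blow-up or oscillation of the full quotients forces the same behaviour for one of the associated dyadic martingales. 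In the blow-up case that martingale is unbounded on $z$ and already succeeds. In the oscillation case the corresponding dyadic quotients cross a fixed rational band $[a,b]$ infinitely often, and I would invoke an effective form of Doob's upcrossing inequality, building a computable martingale that reinvests a fixed fraction of its capital at the start of each upcrossing and banks the winnings, so that its value tends to infinity along $z$. The main obstacle is precisely this passage from full to dyadic difference quotients together with the effective upcrossing construction: one must verify that the shifted grids capture every oscillation, that the constructed betting strategy is a legitimate computable martingale, and that bounded-but-oscillating quotients genuinely yield infinitely many completed upcrossings for the strategy to capitalize on.
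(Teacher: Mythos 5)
You should first note that the paper never proves this statement: it is quoted from Brattka, Miller, and Nies \cite{BMN} purely as motivating background, so your sketch can only be measured against the known proof there. Your architecture --- the correspondence $f \mapsto d_f$ between nondecreasing functions and dyadic slope martingales, with both implications argued contrapositively --- is indeed the skeleton of that proof, and the direction ``$z$ not computably random $\Rightarrow$ some nondecreasing computable $f$ is non-differentiable at $z$'' is right in outline. Two caveats there: dyadic (more generally computable) $z$ should be disposed of separately, e.g.\ by $f(x)=\max(x,z)$, and the continuity of $f_\mu(x)=\mu([0,x])$ is not automatic --- a computable martingale may charge an atom, in which case $f_\mu$ is discontinuous and hence not computable in the analysis sense. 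This is repairable (pass to a savings-type martingale or strip the atomic part before integrating), but as written it is an acknowledged unproved claim rather than a checkable step.

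The genuine gap is in the forward direction, at precisely the step you flag as the main obstacle. Comparability of slopes ``up to a bounded factor $C$'' between an arbitrary interval $I \ni z$ and a (shifted) dyadic interval of comparable length does transfer \emph{unboundedness} of difference quotients to one of the finitely many grids, but it does not transfer \emph{oscillation}: if $0 < \underline{D}f(z) < \overline{D}f(z) < \infty$ with $\overline{D}f(z)/\underline{D}f(z) < C^2$, every grid slope sequence can converge to a single limit $L$ while the full quotients continue to oscillate, since sandwiching only confines the full quotients to the band $[L/C,\,CL]$. Your pigeonhole therefore yields boundedness of the derivates, not existence of the derivative, and the upcrossing martingale you then propose may have no completed upcrossings to capitalize on. The known proofs close this gap with a genuinely different device: either an effectivized Vitali/rising-sun covering argument that converts oscillation of the full quotients directly into a betting strategy on dyadic intervals, or (as in the Morayne--Solecki route that the present paper itself follows in Theorem \ref{integralrandom}) by applying the grid comparison not to slopes of $f$ but to averages of a nonnegative quantity such as $|f-f_m|$ whose grid averages tend to zero, so that the multiplicative constant $C$ is harmless. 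Without one of these ideas, the finite-oscillation case of the forward implication remains unproven.
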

This paper concerns a related theorem, also due to Lebesgue \cite{Lebesgue}.
\begin{lebesgue}
For each $f \in L_1([0, 1]^n)$,
\begin{equation*}
f(x) = \lim_{Q \rightarrow x} \frac{\int_Q f d\mu}{\mu(Q)}
\end{equation*}
for almost every $x \in [0, 1]^n$. The limit is taken over all open cubes $Q$ containing $x$ as the diameter of $Q$ tends to 0.
\end{lebesgue}
Pathak first studied the Lebesgue differentiation theorem in the context of Martin-L\"of randomness \cite{Pathak}. Under the assumption that the function is $L_1$-computable, Pathak showed that the Lebesgue differentiation theorem holds for every Martin-L\"of random point. Subsequently, Pathak, Rojas and Simpson improved this theorem \cite{PRS}. They showed that the Lebesgue differentiation theorem holds at a point $z$ for every $L_1$ computable function if and only if $z$ is Schnorr random \cite{PRS}. Independently, and using very different techniques, Rute also showed that the Lebesgue differentiation theorem holds for Schnorr random points \cite{Rute}.

This paper concerns the connection between \textit{resource-bounded} randomness and analysis. While there has been work on this interaction \cite{BJL, LL, Nies}, resource-bounded randomness in analysis is still poorly understood. Recently, Nies extended the result of Brattka, Miller and Nies to the polynomial time domain \cite{Nies}. Specifically, Nies characterized polynomial time randomness using the differentiability of nondecreasing polynomial time computable functions. In this paper, we extend this research of the Lebesgue differentiation theorem to the context of resource-bounded randomness. We show that the Lebesgue differentiation theorem characterizes \textit{weakly polynomial space randomness}. We note that the polynomial space variant of Nies' result implies our result in one dimension. However, as in classical analysis, the proof for arbitrary dimension requires significantly different tools.

In order to work with resource bounded computability over continuous domains, we use the framework for polynomial space computability in $\mathbb{R}^n$ developed by Ko \cite{Ko}. Using generalizations of Ko's \textit{polynomial space approximable} sets, we define weakly polynomial space randomness, a new variant of polynomial space randomness. We prove that Lutz's notion of polynomial space randomness implies weakly polynomial randomness. Weakly polynomial space randomness uses open covers, similar to Martin-L\"of's original definition, unlike the martingale definitions commonly used in resource-bounded randomness. The use of open covers lends itself better to adapting many theorems of classical analysis. We believe that the notion of weakly polynomial space randomness will be useful in further investigations of resource-bounded randomness in analysis. 

Using this definition of randomness, we extend the result of Pathak, et al, and Rute to polynomial space randomness. Specifically, we prove that a point $x$ is weakly polynomial space random if and only if the Lebesgue differentiation theorem holds at $x$ for every polynomial space $L_1$-computable function. Structurally, the proof of this theorem largely follows that of Pathak, et al. However, the restriction to polynomial space forces significant changes to the internal methods. To prove the converse of our theorem, we introduce \textit{dyadic tree decompositions}. Intuitively, a dyadic tree decomposition partitions an open cover randomness test into a tree structure. This allows for the construction of a polynomial space $L_1$-computable function so that the Lebesgue differentiation theorem fails for any point covered by the test. We believe that dyadic tree decompositions will be useful in further research.

\section{Preliminaries}
Throughout the paper, $\mu$ will always denote the Lebesgue measure on $\mathbb{R}^n$. We denote the set of all Lebesgue integrable functions $f:[0, 1]^n \rightarrow \mathbb{R}$ by $L_1([0, 1]^n)$. A \textit{dyadic rational number d} is a rational number that has a finite binary expansion; that is $d = \frac{m}{2^r}$ for some integers $m$, $r$ with $r \geq 0$. We denote the set of all dyadic rational numbers by $\mathbf{D}$. We denote the set of all dyadic rationals $d$ of precision $r$ by $\mathbf{D}_r$. Formally,
\begin{equation*}
\mathbf{D}_r = \{\frac{m}{2^r} \, | \, m \in \mathbb{Z}\}.
\end{equation*}
We denote the set of dyadic rationals in the interval $[0, 1]$ by $\mathbf{D}[0, 1]$. We denote the set of dyadic rationals of precision $r$ in the interval $[0, 1]$ by $\mathbf{D}_r[0, 1]$.
An \textit{open dyadic cube of precision r} is a subset $Q \subseteq \mathbb{R}^n$ such that 
\begin{equation*}
Q = (\frac{a_1}{2^r}, \frac{a_1 + 1}{2^r}) \times \ldots \times (\frac{a_n}{2^r}, \frac{a_n + 1}{2^r}),
\end{equation*}
where $a_i \in \mathbb{Z}$, and $r \in \mathbb{N}$. We say that the points $\{\frac{a_1}{2^r}$, $\frac{a_1 + 1}{2^r}, \ldots \frac{a_n}{2^r}$, $\frac{a_n + 1}{2^r}\}$ are the \textit{endpoints of $Q$}. In the same manner, we define closed dyadic cubes, and half-open dyadic cubes. We denote the set of all open dyadic cubes of precision $r$ by
\begin{equation*}
\mathbf{B}_r = \{Q \, | \, Q \mbox{ is an open dyadic cube of precision }r\}.
\end{equation*}
For an open set $Q \subseteq \mathbb{R}^n$ and $t \in \mathbb{R}^n$, define the translation of $Q$ by $t$ to be the set 
\begin{equation*}
t + Q = \{t + x \, | \, x \in Q\}.
\end{equation*}
\subsection{Resource-Bounded Randomness in $\mathbb{R}^n$}
Lutz and Lutz recently adapted resource-bounded randomness to arbitrary dimension \cite{LL}. In this section, we review their definition of polynomial space randomness in $\mathbb{R}^n$. 

Let $r \in \mathbb{N}$, $\mathbf{u} = (u_1,\ldots, u_n) \in \mathbb{Z}^n$. Define the \textit{r-dyadic cube at $\mathbf{u}$} to be the half-open dyadic cube of precision r,
\begin{equation*}
Q_r(\mathbf{u}) = [u_1 \cdot 2^{-r}, (u_1 + 1) \cdot 2^{-r}) \times \ldots \times [u_n \cdot 2^{-r}, (u_n + 1) \cdot 2^{-r}).
\end{equation*}
Define the family 
\begin{equation*}
\mathcal{Q}_r = \{Q_r(\mathbf{u}) \; \vert \; \mathbf{u} \in \{0,\ldots, 2^r - 1\}^n \}.
\end{equation*}
So then $\mathcal{Q}_r$ is a partition of the unit cube $[0, 1)^n$. The family
\begin{equation*}
\mathcal{Q} = \mathop{\bigcup}_{r = 0}^{\infty} \mathcal{Q}_r,
\end{equation*}
is the set of all half-open dyadic cubes in $[0, 1)^n$.

A \textit{martingale} on $[0, 1)^n$ is a function $d: \mathcal{Q} \rightarrow [0, \infty)$ satisfying
\begin{equation}
d(Q_r(\mathbf{u})) = 2^{-n} \mathop{\sum}_{\mathbf{a} \in \{0, 1\}^n} d(Q_{r + 1}(2\mathbf{u} + \mathbf{a})),
\end{equation}
for all $Q_r(\mathbf{u}) \in \mathcal{Q}$. We may think of a martingale $d$ as a strategy for placing successive bets on which cube contains $x$. After $r$ bets have been placed, the bettor's capital is 
\begin{equation*}
d^{(r)}(\mathbf{x}) = d(Q_r(\mathbf{u})),
\end{equation*}  
where $\mathbf{u}$ is the unique element of $\{0,\ldots, 2^r - 1\}^n$ such that $\mathbf{x} \in Q_r(\mathbf{u})$. A martingale $d$ \textit{succeeds} at a point $\mathbf{x} \in [0, 1)^n$ if 
\begin{equation*}
\limsup\limits_{r \rightarrow \infty} d^{(r)}(\mathbf{x}) = \infty.
\end{equation*}
Let
\begin{equation*}
J = \{(r, \mathbf{u}) \in \mathbb{N} \times \mathbf{Z}^n \, | \, \mathbf{u} \in \{0,\ldots, 2^r - 1\}^n \}.
\end{equation*}
We say that a martingale $d: \mathcal{Q} \rightarrow [0, \infty)$ is \textit{computable} if there is a computable function $\hat{d}: \mathbb{N} \times J \rightarrow \mathbb{Q} \cap [0, \infty)$ such that for all $(s, r, \mathbf{u}) \in \mathbb{N} \times J$,
\begin{equation}\label{compeq}
|\hat{d}(s, r, \mathbf{u}) - d(Q_r(\mathbf{u}))| \leq 2^{-s}.
\end{equation}
A martingale $d: \mathcal{Q} \rightarrow [0, \infty)$ is \textit{p-computable} (resp. \textit{pspace-computable}) if there is a function $\hat{d}: \mathbb{N} \times J \rightarrow \mathbb{Q} \cap [0, \infty)$ that satisfies \ref{compeq} and is computable in $(s + r)^{O(1)}$ time (resp. space). A point $x \in \mathbb{R}^n$ is \textit{p-random} (resp. \textit{pspace-random}) if no p-computable (resp. pspace-computable) martingale succeeds at $x$.

\subsection{Polynomial Space Computability in $\mathbb{R}^n$}
In this section, we review Ko's framework for complexity theory in $\mathbb{R}^n$ \cite{Ko}. For the remainder of the paper, we include the write tape when considering polynomial space bounds of Turing machines.

We first introduce the polynomial space $L_1$-computable functions, the class of functions we will be using in the proof of the Lebesgue differentiation theorem. This definition is equivalent to Ko's notion of \textit{pspace approximable functions}. It is a direct analog of the $L_1$-computable functions used in computable analysis.

A function $f: [0, 1]^n \rightarrow \mathbb{R}$ is a \textit{simple step function} if $f$ is a step function such that
\begin{enumerate}
\item $f(x) \in \mathbf{D}$ for all $x \in [0, 1]^n$ and
\item there exists a finite number of (disjoint) dyadic boxes $Q_1,\ldots, Q_k$ and dyadic rationals $d_1,\ldots, d_k$ such that $f(x) = \sum\limits_{i = 1}^{k}d_i\chi_{Q_i}(x)$, where $\chi_Q$ is the characteristic function of a set $Q$.
\end{enumerate}

A function $f \in L_1([0, 1]^n)$ is \textit{polynomial space $L_1$-computable} if there exists a sequence of simple step functions, $\{f_m\}_{m \in \mathbb{N}}$, and a polynomial $p$ such that for all $d \in \mathbf{D}^n$,
\begin{enumerate}
\item $f_m(x) = \sum\limits_{i = 1}^{k}d_i\chi_{Q_i}(x)$, such that the endpoints of each $Q_i$ are in $\mathbf{D}^n_{p(m)}$,
\item there is a polynomial space TM $M$ computing $f_m$ in the sense that \\
$M(0^m, d) = \begin{cases} f_m(d) &\mbox{if } d$ is not a breakpoint of $f_m \\ 
\# & \mbox{otherwise }\end{cases}$
\item $\|f - f_m\|_1 \leq 2^{-n}$ .
\end{enumerate}

Note that we may assume that the polynomial $p$ is increasing. We will frequently use the following nice property of polynomial space $L_1$-computable functions. If $f \in L_1([0, 1]^n)$ is approximated by sequence of simple step function $\{f_m\}$ at precision $p$, then for every $i > 0$, $f_i$ is a constant function on every $Q \in \mathbf{B}_{p(i)}$.

An infinite sequence $\{S_m\}_{m \in \mathbb{N}}$ of finite unions of open boxes is \textit{polynomial space computable} if there exists a polynomial space TM $M$ such that for all $m > 0$, and all $d \in \mathbf{D}^n$,
\begin{equation*}
M(0^m, d) = \begin{cases} 
				1 &\mbox{if } d \in S_m \\ 
				-1 &\mbox{if } d$ is a boundary point of $S_m \\
				0 &\mbox{otherwise} \end{cases}
\end{equation*}
A set $S \subseteq [0, 1]^n$ is \textit{polynomial space approximable} if $S$ is measurable and there exists a polynomial space computable sequence of sets $\{S_m\}_{m \in \mathbb{N}}$ such that, for every $m > 0$,
\begin{enumerate}
\item there is a polynomial $p$ such that all endpoints of $S_m$ are in $\mathbf{D}^n_{p(m)}$ and
\item $\mu(S \Delta S_m) \leq 2^{-m}$.
\end{enumerate}
Note that we may assume that the polynomial $p$ is increasing; that is $p(i) \leq p(i + 1)$, for all $i \in \mathbb{N}$.

\section{Uniformly Approximable Sequences}

We now generalize Ko's definition of approximable sets to approximable \textit{arrays} of sets. We follow Ko in first defining computability, then leveraging this to define approximability.
\begin{definition}
An infinite array $\{S_m^k\}_{k, m \in \mathbb{N}}$ of finite unions of open boxes is \textit{uniformly polynomial space computable} if there exists a polynomial space TM $M$ such that for all $k, m > 0$, and all $d \in \mathbf{D}^n$,
\begin{equation*}
M(0^m, 0^k, d) = \begin{cases} 
				1 &\mbox{if } d \in S_m^k \\ 
				-1 &\mbox{if } d$ is an boundary point of $S_m^k \\
				0 &\mbox{otherwise} \end{cases}
\end{equation*}
If $\{S_m^k\}$ is uniformly polynomial space computable and $M$ is a TM satisfying the definition, we say $M$ \textit{computes} $\{S^k_m\}$.
\end{definition}

\begin{definition}
A sequence of sets $\{U_m\}_{m \in \mathbb{N}}$ is \textit{uniformly polynomial space approximable} if there exists a uniformly polynomial space computable array of sets $\{S_m^k\}$ and a polynomial $p$ such that
\begin{enumerate}
\item all endpoints of $S_m^k$ are in $\mathbf{D}^n_{p(m + k)}$ and
\item $\mu(U_m \Delta S_m^k) \leq 2^{-k}$.
\end{enumerate}
If a polynomial $p$ and a uniformly polynomial space computable sequence $\{S^k_m\}$ satisfies (1) and (2), we say that $\{S^k_m\}_{k, m \in \mathbb{N}}$ \textit{approximates} $\{U_m\}$ \textit{at precision $p$}. Note that we may assume that the polynomial $p$ is increasing.
\end{definition}

We now show that we can construct \textit{uniformly} pspace computable sequences from pspace computable sequences. This lemma will be useful, as polynomial space computability is an easier property to verify than its uniform counterpart. 
\begin{lemma}\label{comptounif}
Let $\{T_i\}_{i \in \mathbb{N}}$ be a polynomial space computable sequence of sets, and $q_1$, $q_2$ be polynomials. For every $k$, $m > 0$, define the set $S^k_m$ by 
\begin{equation*}
S^k_m = \bigcup\limits_{i = q_1(m)}^{q_2(k)} T_i.
\end{equation*}
Then the array $\{S^k_m\}$ is uniformly polynomial space computable. 
\end{lemma}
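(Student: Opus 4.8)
The plan is to exhibit a single polynomial space Turing machine $M$ that decides the array $\{S^k_m\}$ by reducing each query to finitely many queries of the machine $M_T$ witnessing the polynomial space computability of $\{T_i\}$. Let $M_T$ be the pspace TM computing $\{T_i\}$, so that $M_T(0^i, d) \in \{1, -1, 0\}$ according to whether $d$ lies in $T_i$, on its boundary, or outside its closure. On input $(0^m, 0^k, d)$, the machine $M$ will loop over $i$ from $q_1(m)$ to $q_2(k)$, simulate $M_T(0^i, d)$ for each such $i$, and maintain two one-bit flags recording whether the value $1$ has ever been returned and whether the value $-1$ has ever been returned. It then outputs $1$ if the first flag is set, $-1$ if only the second is set, and $0$ otherwise.

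For correctness I would first observe that, since each $T_i$ is a finite union of open boxes and the index range $[q_1(m), q_2(k)]$ is finite, $S^k_m$ is itself a finite union of open boxes; in particular it is open, so its boundary equals $\overline{S^k_m} \setminus S^k_m$, and because closure commutes with finite unions we have $\overline{S^k_m} = \bigcup_i \overline{T_i}$. The case analysis then splits cleanly. If some simulation returns $1$, then $d \in T_i \subseteq S^k_m$, so $d$ lies in the (open) set and $1$ is correct. If no simulation returns $1$ but some returns $-1$, then $d$ is in no $T_i$, hence $d \notin S^k_m$, yet $d \in \partial T_i \subseteq \overline{T_i} \subseteq \overline{S^k_m}$, so $d$ is a boundary point and $-1$ is correct. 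Finally, if every simulation returns $0$, then $d$ lies outside every $\overline{T_i}$, hence outside $\bigcup_i \overline{T_i} = \overline{S^k_m}$, so $0$ is correct.

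The space analysis is where the choice of data structure matters, but no single step is hard. The loop counter $i$ ranges up to $q_2(k)$, so it is storable in $O(\log q_2(k))$ bits; writing the unary string $0^i$ onto a work tape for the simulation costs at most $q_2(k)$ cells, which is polynomial in $k$. Each simulation of $M_T(0^i, d)$ runs in space polynomial in $i + |d| \le q_2(k) + |d|$, and the essential point is that this workspace is reused from one value of $i$ to the next rather than accumulated; only the counter and the two flags persist across iterations. Hence the total space is polynomial in $m + k + |d|$, as required.

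The main obstacle to watch is really the boundary (output $-1$) case: one must be sure that the three-way classification produced by $M_T$ aggregates correctly to the three-way classification of the union, which is exactly what the topological identities $\overline{\bigcup_i T_i} = \bigcup_i \overline{T_i}$ and $\partial S^k_m = \overline{S^k_m} \setminus S^k_m$ guarantee for finite unions of open sets. I would also briefly dispatch the degenerate case $q_1(m) > q_2(k)$, in which the loop is empty, $S^k_m = \emptyset$, and $M$ correctly outputs $0$ everywhere.
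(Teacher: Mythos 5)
Your proposal is correct and follows essentially the same route as the paper: a single machine that, on input $(0^m,0^k,d)$, queries the machine for $\{T_i\}$ on each $i$ in the range $[q_1(m), q_2(k)]$, outputs $1$ if any query returns $1$, else $-1$ if any returns $-1$, else $0$, reusing workspace across iterations. Your explicit justification of the boundary case via $\overline{\bigcup_i T_i}=\bigcup_i \overline{T_i}$ is more detailed than the paper's, which simply asserts correctness.
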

\begin{proof}
It is clear that $S^k_m$ is a finite union of open boxes for each $k$ and $m > 0$. Let $M^\prime$ be the polynomial space TM computing $\{T_i\}$. For every $k$, $m  > 0$, and $d \in \mathbf{D}^n$, define the TM $M$ by

$M(0^m, 0^k, d) = \begin{cases} 
				1 &\mbox{if } M^\prime(0^i, d) = 1$ for any $q_1(m) \leq i \leq q_2(k) \\ 
				-1 &\mbox{else, if } M^\prime(0^i, 0^{2k + 2}, d) = -1$ for any $q_1(m) \leq i \leq q_2(k) \\
				0 &\mbox{otherwise} \end{cases}$.

Clearly, $M$ is computable in polynomial space. Hence, $\{S^k_m\}_{k, m \in \mathbb{N}}$ is uniformly polynomial space computable. 
\end{proof}

Similarly, we are able to construct uniformly pspace approximable sequences from other uniformly approximable sequences.
\begin{lemma}\label{approxtounif}
Let $q$ be a polynomial $j \in \mathbb{N}$, and $(V_i)$ be a uniformly polynomial space approximable sequence, such that $\mu(V_i) \leq 2^{-i + j}$. Define the sequence $\{U_m\}$ by
\begin{equation*}
U_m = \bigcup\limits_{i = q(m)}^\infty V_i.
\end{equation*} 
Then $\{U_m\}_{m \in \mathbb{N}}$ is a uniformly polynomial space approximable sequence.
\end{lemma}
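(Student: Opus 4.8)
The plan is to collapse the infinite union defining $U_m$ to a finite one and then replace each surviving $V_i$ by its polynomial-space computable approximation. Since $(V_i)$ is uniformly polynomial space approximable, fix a uniformly polynomial space computable array $\{W_i^k\}$ and an increasing polynomial $p$ witnessing this, so that $\mu(V_i \Delta W_i^k) \le 2^{-k}$ and every endpoint of $W_i^k$ lies in $\mathbf{D}^n_{p(i+k)}$. Let $M_W$ be the machine computing $\{W_i^k\}$. My target array $\{S_m^k\}$ will be a finite union of sets $W_i^\kappa$ over a polynomially bounded range of indices, with a precision $\kappa$ chosen to meet the $2^{-k}$ error budget.

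First I would control the tail. Because $\mu(V_i) \le 2^{-i+j}$, for any $N$ we have
\begin{equation*}
\mu\!\left(\bigcup_{i=N+1}^\infty V_i\right) \le \sum_{i=N+1}^\infty 2^{-i+j} = 2^{-N+j}.
\end{equation*}
Choosing $N = N(m,k) := \max\{q(m),\, k+j+2\}$ makes the discarded tail have measure at most $2^{-k-2}$, and $N$ is polynomial in $m+k$ since $q$ is a polynomial. As $A_m^k := \bigcup_{i=q(m)}^{N} V_i \subseteq U_m$, this gives $\mu(U_m \Delta A_m^k) \le 2^{-k-2}$. Next I approximate each retained $V_i$ at a common precision $\kappa = \kappa(m,k)$ and set $S_m^k := \bigcup_{i=q(m)}^{N} W_i^{\kappa}$. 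Using the containment $\bigcup_i V_i \,\Delta\, \bigcup_i W_i^\kappa \subseteq \bigcup_i (V_i \Delta W_i^\kappa)$ together with subadditivity,
\begin{equation*}
\mu(A_m^k \Delta S_m^k) \le \sum_{i=q(m)}^{N} \mu(V_i \Delta W_i^\kappa) \le (N+1)\,2^{-\kappa}.
\end{equation*}
Taking $\kappa := k + \lceil \log_2(N+1)\rceil + 2$ bounds this by $2^{-k-2}$, and the triangle inequality for symmetric difference yields $\mu(U_m \Delta S_m^k) \le 2^{-k}$. Moreover $\kappa$ is polynomial in $m+k$, so since each $W_i^\kappa$ has endpoints in $\mathbf{D}^n_{p(i+\kappa)} \subseteq \mathbf{D}^n_{p(N+\kappa)}$ and $p$ is increasing, all endpoints of $S_m^k$ lie in $\mathbf{D}^n_{p'(m+k)}$ for the polynomial $p'(m+k) := p(N+\kappa)$.

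It remains to verify that $\{S_m^k\}$ is uniformly polynomial space computable, which I would handle exactly as in Lemma \ref{comptounif}. On input $(0^m, 0^k, d)$ the new machine computes $N$ and $\kappa$, then loops $i$ from $q(m)$ to $N$: it outputs $1$ if $M_W(0^i, 0^\kappa, d) = 1$ for some $i$; else $-1$ if $M_W(0^i, 0^\kappa, d) = -1$ for some $i$; else $0$. This correctly reports the interior, boundary, and exterior of the finite union $S_m^k$. Because $N$ and $\kappa$ are polynomial in $m+k$, the loop counter needs only $O(\log(m+k))$ bits, each call to $M_W$ runs in space $(i+\kappa)^{O(1)} = (m+k)^{O(1)}$, and reusing the work space across iterations keeps the total space polynomial.

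The main obstacle I anticipate is book-keeping the two independent sources of error against the single budget $2^{-k}$: the truncation of the infinite union and the per-set approximation error. The delicate point is that the number of retained terms $N+1$ itself depends on $k$, so the precision $\kappa$ must be inflated by $\lceil\log_2(N+1)\rceil$ to absorb the growth of the union; the crux is checking that this inflated $\kappa$, together with $N$, remains polynomial in $m+k$, so that both the endpoint-precision bound and the space bound survive. Everything else follows the template of Lemma \ref{comptounif}.
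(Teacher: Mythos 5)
Your proposal is correct and follows essentially the same route as the paper: truncate the infinite union at an index around $k+j$ so the tail costs at most half the budget, approximate each retained $V_i$ by its array element at a boosted precision so the accumulated symmetric-difference error costs the other half, and run the same loop-and-query machine as in Lemma \ref{comptounif}. The only differences are cosmetic (the paper fixes the per-term precision at $2k+2$ rather than your $k+\lceil\log_2(N+1)\rceil+2$, and truncates at $k+j+1$ rather than $\max\{q(m),k+j+2\}$), so no further comparison is needed.
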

\begin{proof}
Let $\{V_i\}$ be a uniformly approximable sequence, approximated by the uniformly pspace computable array $\{T^s_i\}$ at precision $p$. For each $k$, $m > 0$, define the set
\begin{equation*}
S^k_m = \bigcup\limits_{i = q(m)}^{k + j +1} T^{2k + 2}_i.
\end{equation*}
It is clear that $\{S^k_m\}_{k, m \in \mathbb{N}}$ is a array of finite unions of open boxes. Let $M^\prime$ be the polynomial space TM computing $\{T^s_i\}$. For every $k$, $m > 0$ and $d \in \mathbf{D}^n$, define the TM $M$ by

$M(0^m, 0^k, d) = \begin{cases} 
				1 &\mbox{if } M^\prime(0^i, 0^{2k + 2}, d) = 1$ for any $q(m) \leq i \leq k + j + 1 \\ 
				-1 &\mbox{else, if } M^\prime(0^i, 0^{2k + 2}, d) = -1$ for any $q(m) \leq i \leq k + j + 1 \\
				0 &\mbox{otherwise} \end{cases}$.
				
It is easy to see that $M$ is a polynomial space TM. Hence, $\{S^k_m\}_{k, m \in \mathbb{N}}$ is a uniformly pspace computable sequence. Recall that we are able to assume that the polynomial $p$ is increasing. Therefore, all endpoints of $S^k_m$ are in $\mathbf{D}^n_{p(3k + 3)}$. Finally, we have

\begin{align*}
\mu(U_m \Delta S^k_m) &= \mu(\bigcup\limits_{i = q(m)}^\infty V_i \, \Delta \bigcup\limits_{i = q(m)}^{k + j + 1} T^{2k + 2}_i) \\
&\leq \mu(\bigcup\limits_{i = q(m)}^{k + j + 1} V_i \, \Delta \bigcup\limits_{i = q(m)}^{k + j + 1} T^{2k + 2}_i) + \mu(\bigcup\limits_{i = k + j + 2}^{\infty} V_i) \\
&\leq \sum\limits_{i = q(m)}^{k + j + 1} \mu(V_i \Delta T^{2k + 2}_i) + \sum\limits_{i = k + j + 2}^{\infty} \mu(V_i) \\
&\leq \sum\limits_{i = q(m)}^{k + 1} 2^{-2k - 2} + \sum\limits_{i = k + j + 2}^{\infty} 2^{-i + j} \\
&\leq 2^{-k}.
\end{align*}
So then $\{S^k_m\}_{k, m \in \mathbb{N}}$ approximates $\{U_m\}_{m \in \mathbb{N}}$ at precision $p$, and therefore $\{U_m\}_{m \in \mathbb{N}}$ is a uniformly polynomial space approximable sequence.
\end{proof}

\section{Weakly Polynomial Space Randomness}
Using uniformly polynomial space approximable sequences, we give an open-cover definition of polynomial space randomness. This variant is intended to be similar to the open-cover definitions of the various computable randomness notions. However, the resource bounds force us to replace the typical enumerability requirements with approximability.
\begin{definition}
Let $a, b \in \mathbb{Z}$. An infinite sequence of open sets $\{U_m\}_{m \in \mathbb{N}} \subseteq [a, b]^n$ is a \textit{polynomial space $\mathcal{W}$-test (pspace $\mathcal{W}$-test)} if the following hold.
\begin{enumerate}
\item For every $m$, $\mu(U_m) \leq 2^{-m}$.
\item There is a uniformly pspace computable array $\{S^k_m\}$ approximating $\{U_m\}$ such that, for all $m$, 
\begin{equation*}
U_m \subseteq \liminf\limits_{k \rightarrow \infty} S^k_m, 
\end{equation*}
\end{enumerate}

A point $x$ \textit{passes} a polynomial space $\mathcal{W}$-test $\{U_m\}_{m \in \mathbb{N}}$ if $x \notin \bigcap\limits_{m = 1}^\infty U_m$. We say that $x$ is \textit{weakly pspace random} if $x$ passes every polynomial space $\mathcal{W}$-test. 
\end{definition}

The approximability of pspace $\mathcal{W}$-tests allows us to estimate the measure of the open covers in polynomial space.
\begin{lemma}\label{approxmeasure}
If $\{U_m\}_{m \in \mathbb{N}}$ is a pspace $\mathcal{W}$-test, then there exists a polynomial space TM $M$ such that for every $s$, $r$, $m \in \mathbb{N}$ and $\mathbf{u} \in \{0,\ldots, 2^r - 1\}^n$
\begin{equation*}
|M(0^s, 0^r, \mathbf{u}, 0^m) - \mu(U_m \cap Q_r(\mathbf{u}))| \leq 2^{-s}.
\end{equation*}
\end{lemma}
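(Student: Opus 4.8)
The plan is to exploit the approximating array guaranteed by the definition of a pspace $\mathcal{W}$-test in order to replace the open set $U_m$, whose measure we cannot read off directly, by a finite union of dyadic boxes whose measure can be computed exactly. Let $\{S^k_m\}$ be a uniformly pspace computable array approximating $\{U_m\}$ at some increasing precision polynomial $p$, and let $M'$ be the TM computing it. Fix inputs $s$, $r$, $m$ and $\mathbf{u}$. I would set $k = s$ and compute $\mu(S^s_m \cap Q_r(\mathbf{u}))$ exactly; since $\mu(U_m \Delta S^s_m) \leq 2^{-s}$, we have
\begin{equation*}
|\mu(U_m \cap Q_r(\mathbf{u})) - \mu(S^s_m \cap Q_r(\mathbf{u}))| \leq \mu((U_m \Delta S^s_m) \cap Q_r(\mathbf{u})) \leq \mu(U_m \Delta S^s_m) \leq 2^{-s},
\end{equation*}
so this quantity already lies within the required tolerance and $M$ may output it.

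To compute $\mu(S^s_m \cap Q_r(\mathbf{u}))$ exactly, set $\ell = \max(p(m+s), r)$. Because every endpoint of $S^s_m$ lies in $\mathbf{D}^n_{p(m+s)}$ and $Q_r(\mathbf{u})$ has precision $r \leq \ell$, the set $S^s_m$ is, up to a set of measure zero, a union of open dyadic cubes of precision $\ell$; in particular each precision-$\ell$ subcube sitting inside $Q_r(\mathbf{u})$ is either entirely contained in $S^s_m$ or essentially disjoint from it. I would therefore tile $Q_r(\mathbf{u})$ by the $2^{n(\ell - r)}$ dyadic subcubes of precision $\ell$ that it contains, and for each one form its center, a point of $\mathbf{D}^n_{\ell + 1}$ that is never a boundary point of $S^s_m$, and query $M'(0^m, 0^s, \cdot)$ at that center. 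Letting $N$ be the number of subcubes whose center is reported to lie in $S^s_m$, we obtain $\mu(S^s_m \cap Q_r(\mathbf{u})) = N \cdot 2^{-n\ell}$, and $M$ outputs this dyadic value.

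The main point to verify, and the only real obstacle, is that this procedure runs in polynomial space even though it ranges over $2^{n(\ell - r)}$ subcubes, a number exponential in the input length. This is exactly where the space-versus-time distinction is essential. I would iterate over the subcubes using a single counter of $n(\ell - r) \leq n\,p(m+s)$ bits, maintain the running total $N$ (also bounded by $2^{n(\ell - r)}$, hence of polynomially many bits), and, for each subcube, reuse the same work space to form its center and to simulate $M'$. Since $p$ is a polynomial and $s$, $r$, $m$ are supplied in unary while $\mathbf{u}$ contributes $O(nr)$ bits, each of these components is bounded by a polynomial in the length of the input, and the simulation of $M'$ uses polynomial space by hypothesis. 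Reusing the work tape across the exponentially many iterations keeps the total space polynomial, which completes the construction and, together with the error estimate above, establishes the lemma.
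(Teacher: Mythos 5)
Your proposal is correct and follows essentially the same route as the paper: fix $k=s$, bound $|\mu(S^s_m\cap Q_r(\mathbf{u}))-\mu(U_m\cap Q_r(\mathbf{u}))|$ by $\mu(U_m\Delta S^s_m)\le 2^{-s}$, and compute $\mu(S^s_m\cap Q_r(\mathbf{u}))$ exactly by sweeping over dyadic subcubes at the approximation precision, querying the approximating machine at each cube's center, and reusing work space so the exponentially long enumeration stays within a polynomial space bound. Your version is, if anything, slightly more careful than the paper's in justifying that centers are never boundary points and in accounting for the counter and accumulator sizes.
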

\begin{proof}
Let $p$ be a polynomial, and $\{U_m\}_{m \in \mathbb{N}}$ be a pspace $\mathcal{W}$-test, approximated by the uniformly pspace computable array $\{S^k_m\}$ at precision $p$. Let $M^\prime$ be the polynomial space TM computing $\{S^k_m\}_{k, m \in \mathbb{N}}$. For every $s$, $r$, $m \in \mathbb{N}$ and $\mathbf{u} \in \{0,\ldots, 2^r - 1\}^n$, define the TM $M$ by,
\begin{equation*}
M(0^s, 0^r, \mathbf{u}, 0^m) = \mu(S^s_m \cap Q_r(\mathbf{u})).
\end{equation*}
Then,
\begin{align*}
|M(0^s, 0^r, \mathbf{u}, 0^m) - \mu(U_m \cap Q_r(\mathbf{u}))| &= |\mu(S^s_m \cap Q_r(\mathbf{u})) - \mu(U_m \cap Q_r(\mathbf{u}))| \\
&\leq \mu((S^s_m \Delta U_m) \cap Q_r(\mathbf{u}))\\
&\leq 2^{-s}.
\end{align*}
It remains to be shown that $M$ is a polynomial space machine. To compute $\mu(S^s_m \cap Q_r(\mathbf{u}))$, $M$ enumerates over all dyadic cubes $Q$ of precision $p(s + m)$. For each $Q$, $M$ computes the center of $Q$, the dyadic rational $d_Q$ of precision $p(s + m) + 1$. If $M^\prime(0^m, 0^s, d_Q) = 1$, then $M$ adds $\mu(Q \cap Q_r(\mathbf{u}))$ to the current measure. After enumerating over all $Q \in \mathbf{B}_{p(s + m)}$, $M$ outputs the total measure. Hence, $M$ is a polynomial space machine, and the proof is complete.
\end{proof}

We are now able to relate weakly polynomial space randomness with Lutz's pspace randomness. The following lemma shows that pspace randomness implies weakly pspace randomness. 
\begin{lemma}\label{weaklytopspace}
Let $\{U_m\}_{m \in \mathbb{N}}$ be a polynomial space $\mathcal{W}$-test. Then there exists a pspace martingale $d$ succeeding on all points $x \in \bigcap\limits_{m = 1}^\infty U_m \, \bigcap \, [0, 1]^n$.
\end{lemma}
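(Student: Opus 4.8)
The plan is to build, for each cover $U_m$, the natural ``measure'' martingale and then superimpose these into a single martingale whose capital blows up on the intersection $\bigcap_m U_m$. For a fixed $m$, define $d_m : \mathcal{Q} \to [0,\infty)$ by $d_m(Q) = \mu(U_m \cap Q)/\mu(Q)$. Since the $2^n$ children $Q_{r+1}(2\mathbf{u}+\mathbf{a})$ partition $Q_r(\mathbf{u})$, additivity of $\mu$ gives $\sum_{\mathbf{a}} \mu(U_m \cap Q_{r+1}(2\mathbf{u}+\mathbf{a})) = \mu(U_m \cap Q_r(\mathbf{u}))$, and substituting this into the defining identity shows each $d_m$ is a martingale. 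I would then set $d = \sum_{m=1}^\infty d_m$. Because $\mu(U_m) \le 2^{-m}$, we have $d_m(Q_r(\mathbf{u})) = 2^{rn}\mu(U_m \cap Q_r(\mathbf{u})) \le 2^{rn - m}$, so for each fixed cube the series converges (and $d([0,1)^n) \le \sum_m 2^{-m} = 1$); since the martingale identity is a finite linear relation that may be summed term by term, $d$ is itself a martingale into $[0,\infty)$.

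Second, I would verify that $d$ succeeds on every $x \in \bigcap_m U_m \cap [0,1)^n$. Fix such an $x$ and any threshold $M$. For each $m \le M$, openness of $U_m$ together with the fact that the nested cubes containing $x$ have diameter $\sqrt{n}\,2^{-r} \to 0$ guarantees an $r_m$ with $Q_r(\mathbf{u}) \subseteq U_m$, hence $d_m^{(r)}(x) = 1$, for all $r \ge r_m$. Taking $r \ge \max_{m\le M} r_m$ and discarding the nonnegative tail yields $d^{(r)}(x) \ge \sum_{m=1}^M d_m^{(r)}(x) = M$. As $M$ was arbitrary, $\limsup_r d^{(r)}(x) = \infty$, so $d$ succeeds.

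The main work, and the step I expect to be the real obstacle, is exhibiting a polynomial-space approximator $\hat d$ with $|\hat d(s,r,\mathbf{u}) - d(Q_r(\mathbf{u}))| \le 2^{-s}$. The key observation is that the measure bound lets me truncate the infinite series at a polynomial index: since $d_m(Q_r(\mathbf{u})) \le 2^{rn-m}$, the tail $\sum_{m > rn + s + 2} d_m(Q_r(\mathbf{u}))$ is at most $2^{-s-2}$. On the remaining finitely many terms I would invoke Lemma~\ref{approxmeasure}: letting $M'$ denote its machine, $M'(0^{t},0^r,\mathbf{u},0^m)$ approximates $\mu(U_m \cap Q_r(\mathbf{u}))$ to within $2^{-t}$ in space polynomial in $t + r$, and choosing $t$ polynomial in $r$ and $s$ (large enough, e.g. $t = 2rn + 2s + 3$, to absorb the $2^{rn}$ factor and the $O(rn+s)$ summands) keeps the accumulated error below $2^{-s-1}$. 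Defining $\hat d(s,r,\mathbf{u}) = 2^{rn}\sum_{m=1}^{rn+s+2} M'(0^{t},0^r,\mathbf{u},0^m)$ then gives total error at most $2^{-s}$.

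Finally I would check the resource bound: the computation loops over polynomially many indices $m$, each call to $M'$ runs in polynomial space and its workspace can be reused, and the running dyadic accumulator has polynomially many bits, so $\hat d$ is pspace-computable and lands in $\mathbb{Q} \cap [0,\infty)$. Note that only the approximability half of the $\mathcal{W}$-test definition is used in this direction; the $\liminf$ condition plays no role here.
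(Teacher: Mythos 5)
Your proposal is correct and follows essentially the same route as the paper's proof: the same measure-ratio martingales $d_m$, the same sum $d=\sum_m d_m$, the same truncation of the series at an index polynomial in $rn+s$ combined with Lemma~\ref{approxmeasure} to get the pspace approximator, and the same openness argument for success. The only differences are inessential choices of truncation index and precision parameter.
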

\begin{proof}
Let $\{U_m\}_{m \in \mathbb{N}}$ be a polynomial space $\mathcal{W}$-test. For each $m > 0$, define the function $d_m : \mathcal{Q} \rightarrow [0, \infty)$ by
\begin{equation*}
d_m(Q_r(\mathbf{u})) = \frac{1}{\mu(Q_r(\mathbf{u}))} \, \mu(U_m \cap Q_r(\mathbf{u})).
\end{equation*} 
We then have
\begin{align*}
2^{-n}\mathop{\sum}_{\mathbf{a} \in \{0, 1\}^n} d_m(Q_{r + 1}(2\mathbf{u} + \mathbf{a})) 
&= 2^{-n}\mathop{\sum}_{\mathbf{a} \in \{0, 1\}^n} \frac{1}{\mu(Q_{r + 1}(2\mathbf{u} + \mathbf{a}))} \, \mu(U_m \cap Q_{r+1}(2\mathbf{u} + \mathbf{a})) \\
&= 2^{rn} \mathop{\sum}_{\mathbf{a} \in \{0, 1\}^n}\mu(U_m \cap Q_{r+1}(2\mathbf{u} + \mathbf{a})) \\
&= 2^{rn} \mu(U_m \bigcap (\mathop{\bigcup}_{\mathbf{a} \in \{0, 1\}^n}Q_{r+1}(2\mathbf{u} + \mathbf{a}))) \\
&= 2^{rn} \mu(U_m \cap Q_r(\mathbf{u})) \\
&= \frac{1}{\mu(Q_r(\mathbf{u}))} \, \mu(U_m \cap Q_r(\mathbf{u})) \\
&= d_m(Q_r(\mathbf{u})),
\end{align*}
and so $d_m$ is a martingale. Define the function $d:\mathcal{Q} \rightarrow [0, \infty)$ by
\begin{equation*}
d(Q_r(\mathbf{u})) = \mathop{\sum}_{m = 1}^\infty d_m(Q_r(\mathbf{u})).
\end{equation*}
Then,  
\begin{align*}
d(Q_0(\mathbf{0})) &= \mathop{\sum}_{m = 1}^\infty d_m(Q_0(\mathbf{0})) \\
& \leq \mathop{\sum}_{m = 1}^\infty 2^{-m} \\
& \leq 1,
\end{align*}
and since each $d_m$ is a martingale, $d$ is a martingale. We now show that $d$ is a pspace martingale by constructing a polynomial space TM $M$ computing $\hat{d}$. By Lemma \ref{approxmeasure}, there exists a polynomial space TM $M^\prime$ such that
\begin{equation*}
|M^\prime(0^s, 0^r, \mathbf{u}, 0^m) - \mu(U_m \cap Q_r(\mathbf{u}))| \leq 2^{-s}.
\end{equation*} 
For every $s \in \mathbb{N}$ and $(r, \mathbf{u}) \in J$, define the TM $M$ by
\begin{align*}
M(0^s, 0^r, \mathbf{u}) &= \sum\limits_{m = 1}^{s + nr + 1} \frac{1}{\mu(Q_r(\mathbf{u}))}M^\prime(0^{s + nr + 2}, 0^r, \mathbf{u}, 0^m)\\
&= \sum\limits_{m = 1}^{s + nr + 1} 2^{nr}M^\prime(0^{s + nr + 2}, 0^r, \mathbf{u}, 0^m)
\end{align*}
Clearly, $M$ runs in polynomial space. Moreover,
\begin{align*}
|M(0^s, 0^r, \mathbf{u}) - d(Q_r(\mathbf{u}))| &= |M(0^s, 0^r, \mathbf{u}) - \mathop{\sum}_{m = 1}^\infty d_m(Q_r(\mathbf{u}))| \\
&\leq |M(0^s, 0^r, \mathbf{u}) - \mathop{\sum}_{m = 1}^{s + nr + 1}d_m(Q_r(\mathbf{u}))| + \mathop{\sum}_{m = s + nr + 2}^{\infty}d_m(Q_r(\mathbf{u})).
\end{align*}
By the definition of $M$,
\begin{align*}
|M(0^s, 0^r, \mathbf{u}) - \mathop{\sum}_{m = 1}^{s + nr + 1}d_m(Q_r(\mathbf{u}))| &= 2^{nr}|\mathop{\sum}_{m = 1}^{s + nr + 1} M^\prime(0^{s + nr + 2}, 0^r, \mathbf{u}, 0^m) - \mu(U_m \cap Q_r(\mathbf{u}))| \\
&\leq 2^{nr}\mathop{\sum}_{m = 1}^{s + nr + 1} 2^{-s - nr - 2} \\
&\leq \mathop{\sum}_{m = 1}^{s + nr + 1}2^{-s - 2} \\
&\leq 2^{-s - 1}.
\end{align*}
Combining the two inequalities, we have
\begin{align*}
|M(0^s, 0^r, \mathbf{u}) - d(Q_r(\mathbf{u}))| &\leq 2^{-s -1} + \mathop{\sum}_{m = s + nr + 2}^{\infty}d_m(Q_r(\mathbf{u})) \\
&\leq 2^{-s -1} + \mathop{\sum}_{m = s + nr + 2}^{\infty}2^{nr} \, 2^{-m} \\
&\leq 2^{-s -1} + 2^{nr}2^{-s - nr - 1} \\
&\leq 2^{-s}.
\end{align*}
Therefore, $d$ is a pspace martingale. 

Assume $x \in \bigcap\limits_{m = 1}^\infty U_m \, \bigcap \, [0, 1]^n$. Let $i > 0$. Then, since $U_i$ is an open set, there exists an $N$ such that for all $r \geq N$, $Q_r(\mathbf{u}) \subseteq U_i$, where $Q_r(\mathbf{u})$ is the unique dyadic cube containing $x$. Hence, for all $r \geq N$, $d_i(Q_r(\mathbf{u})) = 1$. Therefore,
\begin{equation*}
\lim\limits_{r \rightarrow \infty} d^{(r)}(x) = \infty,
\end{equation*}
and so $d$ succeeds on $x$.
\end{proof}

\section{Randomness and the Lebesgue Differentiation Theorem}
In this section we prove our main theorem, that the Lebesgue differentiation theorem characterizes weakly pspace-randomness. Recall the statement of Lebesgue's theorem.
\begin{lebesgue}
For each $f \in L_1([0, 1]^n)$,
\begin{equation*}
f(x) = \lim_{Q \rightarrow x} \frac{\int_Q f d\mu}{\mu(Q)}
\end{equation*}
for almost every $x \in [0, 1]^n$. The limit is taken over all open cubes $Q$ containing $x$ as the diameter of $Q$ tends to 0.
\end{lebesgue}

A point $x$ that satisfies the Lebesgue differentiation theorem is called a \textit{Lebesgue point}. We will prove the following theorem, 
\begin{maintheorem}
A point $x$ is weakly pspace-random if and only if for every polynomial space $L_1$-computable $f \in L_1([0, 1]^n)$, and every polynomial space computable sequence of simple functions $\{f_m\}_{m \in \mathbb{N}}$ approximating $f$,
\begin{equation}\label{maineq}
\lim_{m \rightarrow \infty} f_m(x) = \lim_{Q \rightarrow x} \frac{\int_Q f d\mu}{\mu(Q)}
\end{equation}
where the limit is taken over all cubes $Q$ containing $x$ as the diameter of $Q$ tends to 0.
\end{maintheorem}

We first make several remarks regarding the form of our main theorem. The use of polynomial space $L_1$-computability is not simply for the sake of generality. It is well-known that if a function is continuous, the Lebesgue differentiation theorem holds for \textit{every} point. Thus, to get a non-trivial randomness result, we must allow the function to be discontinuous. Our second remark concerns the limit of the approximating functions. In the statement of the classical theorem, the integral limit is equal to $f(x)$; whereas in our main theorem, it is equal to $\lim_{m \rightarrow \infty} f_m(x)$. This concession is necessary. For any point $x$, it is trivial to construct a polynomial space $L_1$-computable function $f$ such that
\begin{equation*}
f(x) \neq \lim_{Q \rightarrow x} \frac{\int_Q f d\mu}{\mu(Q)}.
\end{equation*}
Consider the function $f$ which is $0$ for all points, except at the given point $x$, $f(x) = 1$. Clearly, $f$ is polynomial space $L_1$-computable, but $x$ does not satisfy the Lebesgue differentiation theorem.

\subsection{Random points satisfy the Lebesgue differentiation theorem}
The outline of our proof roughly follows that of the classical proof of the Lebesgue differentiation theorem \cite{PRS, WZ}. However, the restriction to polynomial space computation significantly changes the internal methods. We first show that if a point $x \in [0, 1]^n$ is weakly pspace-random, then it must be contained in an open dyadic cube. This is a useful property of weakly pspace-random points that we take advantage of in later theorems.
\begin{lemma}\label{dyadic2random}
Let $x = (x_1,\ldots,x_n) \in [0, 1]^n$ be weakly pspace-random. Then, for every $i$, $x_i$ is not a dyadic rational.
\end{lemma}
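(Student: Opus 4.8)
The statement asks us to show that a weakly pspace-random point cannot have any dyadic rational coordinate. The natural strategy is the contrapositive: I will show that if some coordinate $x_i$ is a dyadic rational, then $x$ fails some pspace $\mathcal{W}$-test, and hence is not weakly pspace-random. The plan is to build a single test $\{U_m\}$ whose intersection $\bigcap_m U_m$ captures every point having at least one dyadic coordinate. Since the set of points with some dyadic coordinate is a countable union of hyperplanes of the form $\{y : y_i = d\}$ for $d \in \mathbf{D}[0,1]$, it has measure zero, so there is hope of covering it by open sets of rapidly shrinking measure.

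The concrete construction I would use is to cover each dyadic hyperplane by a thin open slab. For a dyadic rational $d = a/2^r$, consider the open slab $\{y \in [0,1]^n : |y_i - d| < \varepsilon\}$ in each coordinate direction $i$; its measure is at most $2\varepsilon$. To make $\mu(U_m) \leq 2^{-m}$ while still covering all dyadic coordinates, I would enumerate dyadic rationals $d$ by precision and index, assign the slab around the $j$-th such rational a width proportional to $2^{-m-j}$ (summable over $j$, and over the $n$ coordinate directions), and set $U_m$ to be the union of all these slabs over all dyadic $d$ and all directions $i$. Then $\mu(U_m) \leq n \sum_j 2^{-m-j+1} \leq 2^{-m}$ for an appropriate normalization, and every point with a dyadic coordinate lies in $U_m$ for every $m$, so it lies in $\bigcap_m U_m$ and fails the test. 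The first condition of a $\mathcal{W}$-test is thus satisfied by design.

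The main obstacle, and the step requiring the most care, is verifying the second condition: that $\{U_m\}$ admits a uniformly pspace computable approximating array $\{S_m^k\}$ with $U_m \subseteq \liminf_k S_m^k$. The slabs have dyadic endpoints (widths are dyadic), so each $U_m$ is already a finite-or-countable union of open boxes with dyadic endpoints; the issue is producing these at polynomial precision. I would truncate the infinite union of slabs at a finite index depending on $k$ (discarding slabs of total measure below $2^{-k}$, which is legitimate since the tail measure is summable), giving a finite union of boxes $S_m^k$ whose endpoints lie in $\mathbf{D}^n_{p(m+k)}$ for a suitable polynomial $p$, and then describe the TM $M(0^m,0^k,d)$ that, given a dyadic point $d$, checks membership in the finitely many retained slabs by comparing coordinates against the relevant thresholds. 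Because only polynomially many slabs (in $m+k$) are retained and each membership test is a coordinate comparison computable in polynomial space, $M$ runs in polynomial space; Lemma~\ref{approxtounif} (or a direct argument) then packages the truncation so that $\mu(U_m \Delta S_m^k) \leq 2^{-k}$ and the $\liminf$ containment holds, since as $k \to \infty$ every slab comprising $U_m$ is eventually retained. The delicate accounting is balancing the slab widths, the truncation index, and the precision polynomial so that all three test requirements hold simultaneously; once that bookkeeping is fixed, the conclusion that $x$ fails the test is immediate.
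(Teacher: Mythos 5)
Your proposal is correct and follows essentially the same approach as the paper: cover the dyadic hyperplanes by open slabs whose widths shrink fast enough that the total measure is at most $2^{-m}$, then truncate the union at an index depending on $k$ to obtain the uniformly pspace computable approximating array. The paper's only organizational differences are that it runs a separate test for each coordinate and indexes the slabs by the precision level of the dyadic rationals (giving all precision-$i$ rationals a common width $2^{-2i-2}$, so that each $U_m$ is a tail union of $m$-independent sets and Lemma~\ref{comptounif} applies directly), but these are bookkeeping choices, not a different argument.
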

\begin{proof}
Let $x = (x_1, \ldots, x_n) \in [0, 1]^n$ be weakly pspace-random. We show that $x_1$ cannot be a dyadic rational, the proof for the other components is similar. For every $i > 0$, define the set 
\begin{equation*}
S_i = \bigcup\limits_{d \in \mathbf{D}_i[0,1]} (d - 2^{-2i - 2}, d + 2^{-2i - 2}) \times (0, 1) \times \ldots \times (0, 1).
\end{equation*}
For every $m > 0$, define the set 
\begin{equation*}
U_m = \bigcup\limits_{i = m}^\infty S_i.
\end{equation*}
We now prove that the sequence $\{U_m\}_{m \in \mathbb{N}}$ is a pspace $\mathcal{W}$-test. It is clear that for every $m > 0$, $U_m$ is an open set. Let $m > 0$, then,
\begin{align*}
\mu(U_m) &= \mu(\bigcup\limits_{i = m}^\infty S_i) \\
&\leq \sum\limits_{i = m}^\infty \mu(S_i) \\
&\leq \sum\limits_{i = m}^\infty 2^{i} 2^{-2i - 1}  \\
&\leq 2^{-m}.
\end{align*}
It remains to be shown that $\{U_m\}_{m \in \mathbb{N}}$ is uniformly pspace approximable. For every $k$, $m > 0$, define the set
\begin{equation*}
T^k_m = \bigcup\limits_{i = m}^{k - 1} S_i.
\end{equation*}
It is easy to verify that $\{S_i\}$ is a polynomial space computable sequence. Hence, by Lemma \ref{comptounif}, $\{T^k_m\}$ is a uniformly polynomial space computable sequence. Finally, for every $k$, $m > 0$,
\begin{align*}
\mu(U_m \Delta T^k_m) &= \mu(\bigcup\limits_{i = k}^\infty S_i) \\
&\leq \sum\limits_{i = k}^\infty \mu(S_i) \\
&\leq 2^{-k},
\end{align*}
and so the sequence $\{U_m\}$ is uniformly polynomial space approximable. It is clear that for every $m$, and all $x \in U_m$, $x \in \liminf_k T^k_m$. Therefore, $\{U_m\}_{m \in \mathbb{N}}$ is a polynomial space $\mathcal{W}$-test. By assumption $x \notin \cap U_m$, therefore $x_1$ is not a dyadic rational. 

Using a similar argument we see that, for all $1 \leq i \leq n$, $x_i$ is not a dyadic rational.
\end{proof}

Let $f$ be a polynomial space $L_1$-computable function, approximated by the pspace computable sequence of simple step functions $\{f_m\}_{m \in \mathbb{N}}$. We now show that for every weakly pspace-random point $x$, the limit $\lim\limits_{m \rightarrow \infty}f_m(x)$ exists. We will need the following inequality due to Chebyshev. For every $f \in L_1([0, 1]^n)$ and $\epsilon > 0$, define the set
\begin{equation*}
S(f, \epsilon) = \{x \, | \, \, |f(x)| > \epsilon\}.
\end{equation*}
\begin{cheby}
Let $f \in L_1([0, 1]^n)$ and $\epsilon > 0$. Then $\mu(S(f, \epsilon)) \leq \frac{\|f\|_1}{\epsilon}$.
\end{cheby}
\begin{theorem}\label{limitrandom}
Let $f \in L_1([0, 1]^n)$ be polynomial space $L_1$ computable, approximated by the polynomial space computable sequence of simple step functions $\{f_m\}_{m \in \mathbb{N}}$. If $x$ is weakly pspace-random, the limit $\lim\limits_{m \rightarrow \infty}f_m(x)$ exists.
\end{theorem}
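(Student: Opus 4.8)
The plan is to show that the sequence $\{f_m(x)\}$ is Cauchy whenever $x$ is weakly pspace-random, by proving that the set of points at which the sequence fails to converge is contained in the kernel $\bigcap_m U_m$ of a suitable pspace $\mathcal{W}$-test. First I would record that, by Lemma \ref{dyadic2random}, a weakly pspace-random $x$ has no dyadic coordinate, so $x$ is never a breakpoint of any $f_m$ and each $f_m(x)$ is a well-defined dyadic value. Writing the sequence as a telescoping series, $f_m(x) = f_1(x) + \sum_{i=1}^{m-1} g_i(x)$ with $g_i = f_{i+1} - f_i$, it suffices to force $\sum_i |g_i(x)| < \infty$ for random $x$.

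Next I would control the differences. Each $g_i$ is a difference of simple step functions, hence itself a simple step function whose pieces are dyadic boxes, and the triangle inequality with the $L_1$ approximation bound $\|f - f_m\|_1 \le 2^{-m}$ gives $\|g_i\|_1 \le 2^{-i+1}$. Fixing the dyadic threshold $\epsilon_i = 2^{-\lfloor i/2\rfloor}$, define the superlevel set $S_i = \{x : |g_i(x)| > \epsilon_i\}$, which is a finite union of open dyadic boxes and is therefore pspace computable uniformly in $i$. Chebyshev's Inequality then yields $\mu(S_i) \le \|g_i\|_1/\epsilon_i \le 2^{-i/2+1}$, and since $\sum_i \epsilon_i < \infty$, any point lying in only finitely many $S_i$ automatically satisfies $\sum_i |g_i(x)| < \infty$. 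Thus the non-convergence set is contained in $\limsup_i S_i$.

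It remains to exhibit $\limsup_i S_i$ as the kernel of a pspace $\mathcal{W}$-test, and the obstacle here is that the measures $\mu(S_i) \approx 2^{-i/2}$ decay too slowly to feed directly into Lemma \ref{approxtounif}, which requires geometric decay $2^{-i+j}$; a tail union $\bigcup_{i \ge m} S_i$ would only have measure $\approx 2^{-m/2}$, violating the $\mu(U_m) \le 2^{-m}$ requirement of a $\mathcal{W}$-test. I would resolve this by compressing the index: set $V_\ell = S_{2\ell} \cup S_{2\ell+1}$, so that $\{V_\ell\}$ is still pspace computable (a finite union of pspace computable sets), has $\limsup_\ell V_\ell = \limsup_i S_i$, and satisfies $\mu(V_\ell) \le 2^{-\ell+2}$. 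Applying Lemma \ref{approxtounif} to $\{V_\ell\}$ with the reindexing $q(m) = m+3$ and $j = 2$ then shows that $U_m = \bigcup_{\ell \ge m+3} V_\ell$ is uniformly pspace approximable, and summing the geometric bound gives $\mu(U_m) \le 2^{-m}$.

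Finally I would check the remaining test conditions and conclude. Since each $V_\ell$ is a finite union of open boxes and the approximating array produced by Lemma \ref{approxtounif} increases to $U_m$ as $k \to \infty$, every $x \in U_m$ lies in $\liminf_k S^k_m$, exactly as in the last step of Lemma \ref{dyadic2random}; hence $\{U_m\}$ is a genuine pspace $\mathcal{W}$-test. Because $\bigcap_m U_m = \limsup_\ell V_\ell = \limsup_i S_i$, a weakly pspace-random $x$ lies outside it, so $|g_i(x)| \le \epsilon_i$ for all large $i$ and $\sum_i |g_i(x)|$ converges, giving the existence of $\lim_m f_m(x)$. I expect the main difficulty to be precisely the tension in the choice of threshold $\epsilon_i$ — large enough that the superlevel sets have, after index compression, geometrically decaying measure, yet small enough that avoiding them forces absolute convergence of the telescoping series — together with the bookkeeping needed to cast the construction into the exact hypotheses of Lemma \ref{approxtounif}.
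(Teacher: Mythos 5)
Your proof is correct and essentially reproduces the paper's argument: your compressed sets $V_\ell = S_{2\ell}\cup S_{2\ell+1}$, with threshold $2^{-\ell}$ on consecutive differences, coincide up to an index shift with the sets the paper builds directly as $S(f_{2i-1}-f_{2i},2^{-i})\cup S(f_{2i}-f_{2i+1},2^{-i})$, and both arguments conclude via Chebyshev's inequality, a tail-union $\mathcal{W}$-test, and Lemma \ref{dyadic2random}. The one detail you gloss over is that a superlevel set of a simple step function need not be open along box boundaries; the paper fixes this by intersecting with $\bigcup_{Q\in\mathbf{B}_{p(2i+1)}}Q$, which is harmless here since weakly pspace-random points have no dyadic coordinates.
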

\begin{proof}
Let p be a polynomial and $f \in L_1([0, 1]^n)$ be polynomial space $L_1$ computable, approximated by the polynomial space computable sequence of simple step functions $\{f_m\}_{m \in \mathbb{N}}$ at precision $p$. Recall that we may assume that $p$ is increasing. For each $i \geq 1$, define the set
\begin{equation*}
S_i = (S(f_{2i-1} - f_{2i}, 2^{-i}) \cup S(f_{2i} - f_{2i + 1}, 2^{-i})) \cap (\bigcup\limits_{Q \in \mathbf{B}_{p(2i + 1)}} Q).
\end{equation*} 
We intersect with the open dyadic cubes of precision $p(2i + 1)$ to ensure that $S_i$ is an open set. For each $m \geq 1$ define the set 
\begin{equation*}
U_m = \bigcup\limits_{i = m + 4}^{\infty} S_i.
\end{equation*} 
We now prove that the sequence $\{U_m\}_{m \in \mathbb{N}}$ is a pspace $\mathcal{W}$-test. Using the properties of simple step functions, it is routine to verify that, for every $i > 0$, $S_i$ is the union of all open dyadic cubes $Q \in \mathbf{B}_{p(2i + 1)}$, such that either
\begin{align*}
|f_{2i - 1}(Q) - f_{2i}(Q)| &> 2^{-i}, \mbox{or} \\
|f_{2i}(Q) - f_{2i + 1}(Q)| &> 2^{-i}.
\end{align*} 
Therefore, for every $m > 0$, $U_m$ is an open set. By Chebyshev's inequality,
\begin{align*}
\mu(S_i) &\leq 2^i (\|f_{2i - 1} - f_{2i} \| + \|f_{2i} - f_{2i + 1} \| )\\
&\leq 2^i (2^{-2i + 2} + 2^{-2i + 1}) \\
&\leq 2^{-i + 3}.
\end{align*}
Using this upper bound on the measure of $S_i$ we obtain
\begin{align*}
\mu(U_m) &\leq \sum\limits_{i = m + 4}^\infty \mu(S_i) \\
&\leq \sum\limits_{i = m + 4}^\infty 2^{-i + 3} \\
&\leq 2^{-m}.
\end{align*}

It remains to be shown that the sequence $\{U_m\}_{m \in \mathbb{N}}$ is uniformly polynomial space approximable. For every $k$, $m > 0$, define the set
\begin{equation*}
T^k_m = \bigcup\limits_{i = m + 4}^{k + 3} S_i.
\end{equation*}
It is clear that $\{S_i\}$ is a polynomial space computable sequence. Hence, by Lemma \ref{comptounif}, $\{T^k_m\}$ is a uniformly pspace computable array. Finally, we have
\begin{align*}
\mu(U_m \Delta T^k_m) &= \mu(U_m \Delta (\bigcup\limits_{i = m + 4}^{k + 3} S_i)) \\
&\leq \mu((\bigcup\limits_{i = k + 4}^{\infty} S_i)) \\
&\leq \sum\limits_{i = k + 4}^\infty \mu(S_i) \\
&\leq \sum\limits_{i = k + 4}^\infty 2^{-i + 3} \\
&\leq  2^{-k}. 
\end{align*}
Finally, it is clear that, for every $m \in \mathbb{N}$ and all $x \in U_m$, $x \in \liminf_k T^k_m$. Hence, $\{U_m\}_{m \in \mathbb{N}}$ is a pspace $\mathcal{W}$-test.

Assume $x$ is weakly pspace-random. Then there exists an $N$ such that for all $m > N$, $x \notin U_m$, and therefore $x \notin S_i$, for all $i > N + 4$. By Lemma \ref{dyadic2random}, $x$ cannot have any dyadic rational components; i.e., $x \in Q$, for some $Q \in \mathbf{B}_{2i + 1}$. Hence, $|f_{2i - 1}(x) - f_{2i}(x)| \leq 2^{-i}$ and $|f_{2i}(x) - f_{2i+ 1}(x)| \leq 2^{-i}$. Let $j > 2N + 8$, then $|f_{j}(x) - f_{j + 1}(x)| \leq 2^{-\frac{j}{2}}$. Therefore, the limit $\lim\limits_{m \rightarrow \infty}f_m(x)$ exists.
\end{proof}

We now focus on the limit 
\begin{equation*}
\lim_{Q \rightarrow x} \frac{\int_Q f d\mu}{\mu(Q)}
\end{equation*}
on the right hand side of our main theorem (equation \ref{maineq}). The restriction to polynomial space computation creates difficulties in considering arbitrary open cubes. Intuitively, we overcome this obstacle through the use of translations of dyadic cubes, which are more amenable to polynomial space computation. Formally, for $t \in \{-\frac{1}{3}, 0, \frac{1}{3}\}^n$, define the set
\begin{equation*}
\mathbf{B}^t_r = \{I^t_r \, | \, I^t_r  = t + Q\mbox{, where }Q \in \mathbf{B}_r\}.
\end{equation*}
That is, $\mathbf{B}^t_r$ is the set of all translations of dyadic cubes of precision $r$ by points $t \in \{-\frac{1}{3}, 0, \frac{1}{3}\}^n$. For every $x \in [0, 1]^n$, let $I^t_r(x)$ denote the (unique) element of $\mathbf{B}^t_r$ containing $x$. The following theorem of Rute \cite{Rute}, using results due to Morayne and Solecki \cite{MS}, shows that it suffices to prove that the right hand limit of equation \ref{maineq} exists for these translations.
\begin{theorem}[\cite{Rute}]\label{rutestheorem}
Let $f \in L_1([0, 1]^n)$, and $x \in [0, 1]^n$. Then the following are equivalent,
\begin{enumerate}
\item the limit $\lim\limits_{Q \rightarrow x} \frac{\int_{Q} f d\mu}{\mu(Q)}$ exists, where the limit is taken over all cubes containing $x$, as the diameter goes to $0$
\item the limit $\lim\limits_{k \rightarrow \infty} \frac{\int_{I^t_k(x)} f d\mu}{\mu(I^t_k(x))}$ exists, for all $t \in \{-\frac{1}{3}, 0, \frac{1}{3}\}^n$.
\end{enumerate}
\end{theorem}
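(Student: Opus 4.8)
The forward implication (1)$\Rightarrow$(2) is immediate and requires no work: for each fixed $t \in \{-\tfrac13,0,\tfrac13\}^n$, the cubes $\{I^t_k(x)\}_{k}$ form one particular sequence of cubes containing $x$ whose diameters tend to $0$, so the existence of the full limit in (1) forces the existence of each grid limit in (2), and in fact pins every one of the $3^n$ grid limits to the common value $L := \lim_{Q\to x}\frac{\int_Q f\,d\mu}{\mu(Q)}$. Thus the entire content lies in the converse (2)$\Rightarrow$(1), and this is where I would do the real work.

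For (2)$\Rightarrow$(1), my plan is to reduce arbitrary cubes to grid cubes via the \emph{one-third trick} (the shifted/adjacent dyadic systems property underlying the Morayne--Solecki argument). The key combinatorial lemma is that there is a constant $C_n$ so that every open cube $Q$ with $x \in Q$ is contained in some grid cube $I = I^t_k(x)$, with $t \in \{-\tfrac13,0,\tfrac13\}^n$ and $\mu(I) \le C_n\,\mu(Q)$, where $k \to \infty$ as $\mathrm{diam}(Q)\to 0$. The $\pm\tfrac13$ shifts are exactly what make this possible: for any location and scale at least one of the three translates of the dyadic grid places $x$ well inside a cube of comparable size. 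Writing $\langle f\rangle_I = \frac{\int_I f\,d\mu}{\mu(I)}$, I would then estimate, using $Q \subseteq I$,
\[
\Bigl|\tfrac{\int_Q f\,d\mu}{\mu(Q)} - \langle f\rangle_I\Bigr| \;\le\; \frac{1}{\mu(Q)}\int_I |f - \langle f\rangle_I|\,d\mu \;\le\; \frac{C_n}{\mu(I)}\int_I |f - \langle f\rangle_I|\,d\mu,
\]
so that $\frac{\int_Q f\,d\mu}{\mu(Q)}$ is pinned to the grid average $\langle f\rangle_I = \langle f\rangle_{I^t_k(x)}$ up to the $L_1$-oscillation of $f$ over $I$. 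Since $\langle f\rangle_{I^t_k(x)}$ converges by hypothesis (2), convergence of the cube averages follows once the oscillation terms are shown to vanish along the grids.

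The main obstacle is precisely controlling these oscillation terms: convergence of the \emph{averages} $\langle f\rangle_{I^t_k(x)}$ along a single grid does not by itself force the $L_1$-oscillation $\frac{1}{\mu(I)}\int_I |f - \langle f\rangle_I|\,d\mu$ to tend to $0$, since $f$ can oscillate inside the nested cubes while its averages cancel. The resolution, which is the heart of the Morayne--Solecki result that I would invoke, is that the $3^n$ grids are mutually incommensurable: because the $\pm\tfrac13$ shifts never align with any dyadic refinement, a persistent oscillation of $f$ inside the cubes of one grid is detected as a failure of average-convergence in another grid. Hence assuming that the averages converge along \emph{all} $3^n$ grids simultaneously forces the oscillations to vanish and forces the $3^n$ limits to agree on a common value $L$; feeding this back into the displayed estimate yields $\frac{\int_Q f\,d\mu}{\mu(Q)}\to L$ uniformly over all cubes $Q\ni x$ as $\mathrm{diam}(Q)\to 0$, which is (1). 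Throughout one assumes $x$ lies in the interior of its grid cubes, i.e.\ $x$ is not on a translated dyadic boundary; this is a measure-zero restriction, automatic for the weakly pspace-random points to which the theorem is ultimately applied.
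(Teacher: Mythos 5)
The first thing to note is that the paper does not prove this statement at all: it is imported from Rute (who builds on Morayne and Solecki) and used as a black box, so there is no in-paper proof to compare against. Your direction (1)$\Rightarrow$(2) is indeed immediate, and your reduction of (2)$\Rightarrow$(1) to the one-third containment lemma ($Q \subseteq I^t_k(x)$ with $\mu(I^t_k(x)) \le C_n\,\mu(Q)$) together with the displayed estimate is a correct inequality. The problem is the step you yourself flag as the main obstacle; your proposed resolution of it is not merely unproven --- it is false.

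You claim that convergence of the averages along all $3^n$ shifted grids forces the normalized $L_1$-oscillation $\frac{1}{\mu(I)}\int_I |f - \langle f\rangle_I|\,d\mu$ to vanish along the grid cubes $I = I^t_k(x)$. By your own containment lemma, that oscillation condition is equivalent to the \emph{strong} Lebesgue property $\lim_{Q\to x}\frac{1}{\mu(Q)}\int_Q|f-L|\,d\mu = 0$, which is strictly stronger than conclusion (1), so it cannot follow from hypothesis (2) if the theorem is true. Concretely, take $n=1$, a non-dyadic $x_0\in(0,1)$, and $f(y)=\sin\bigl(1/(y-x_0)\bigr)$, which is bounded and hence in $L_1$. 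For any interval $(a,b)\ni x_0$ one has $\int_a^b f = \int_0^{b-x_0}\sin(1/u)\,du - \int_0^{x_0-a}\sin(1/u)\,du = O\bigl((b-a)^2\bigr)$, so $\frac{1}{b-a}\int_a^b f \to 0$; thus (1) holds with $L=0$ and a fortiori (2) holds for every shift $t$. Yet $\frac{1}{\mu(I)}\int_I |f - 0|\,d\mu \to \frac{2}{\pi} \ne 0$, so the oscillation terms you need to vanish do not vanish even though both (1) and (2) hold at $x_0$. Hence no appeal to the ``mutual incommensurability'' of the shifted grids can close this gap: the quantity your estimate is bounded by simply does not tend to zero. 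Any correct proof of (2)$\Rightarrow$(1) must work directly with the signed averages (the martingale values along the shifted dyadic filtrations, sandwiching an arbitrary cube between grid cubes of comparable measure) and cannot factor through $\int_I|f-\langle f\rangle_I|\,d\mu$; this is what the Morayne--Solecki/Rute argument does, and you would need to reproduce it rather than the oscillation estimate.
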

\bigskip
\noindent
We now show that the limit
\begin{equation*}
\lim\limits_{m \rightarrow \infty}\frac{\int_{I^t_r(x)} |f-f_{m}|d\mu}{\mu(I^t_r(x))}
\end{equation*}
exists, for every $t \in \{-\frac{1}{3}, 0, \frac{1}{3}\}^n$ and $r > 0$. We will need the following inequality due to Hardy and Littlewood. For every $f \in L_1([0, 1]^n)$ and $\epsilon > 0$, define the set 
\begin{equation*}
T(f, \epsilon) = \{x \, | \, \sup\limits_{r, t}\frac{\int_{I^t_r(x)} f d\mu}{\mu(I^t_r)} > \epsilon\},
\end{equation*}
where the supremum is taken over all $r > 0$ and $t \in \{-\frac{1}{3}, 0, \frac{1}{3}\}^n$. 
\begin{hardy}\label{hardy}
Let $f \in L_1([0, 1]^n)$ and $\epsilon > 0$. Then there exists a constant $c$ such that $\mu(T(f, \epsilon)) \leq \frac{c \|f\|_1}{\epsilon}$.
\end{hardy}

\begin{theorem}\label{integralrandom}
Let $f \in L_1([0, 1]^n)$ be polynomial space $L_1$ computable, approximated by the polynomial space computable sequence of step functions $\{f_m\}_{m \in \mathbb{N}}$. If $x$ is weakly pspace-random, then 
\begin{equation*}
\lim\limits_{m \rightarrow \infty}\frac{\int_{I^t_r(x)} |f-f_{m}|d\mu}{\mu(I^t_r(x))} = 0,
\end{equation*}
for every $t \in \{-\frac{1}{3}, 0, \frac{1}{3}\}^n$ and $r > 0$.
\end{theorem}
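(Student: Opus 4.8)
The plan is to prove the stronger, scale-uniform statement that $\lim_{m\to\infty}\sup_{r,t}\frac{\int_{I^t_r(x)}|f-f_m|\,d\mu}{\mu(I^t_r(x))}=0$, from which the claimed limit for each fixed $t$ and $r$ follows at once. (For a single fixed $r$ the claim is in fact trivial, since $\int_{I^t_r(x)}|f-f_m|\,d\mu\le\|f-f_m\|_1\le 2^{-m}$ while $\mu(I^t_r(x))=2^{-rn}$ is constant; it is precisely the uniformity over all scales $r$, needed in the sequel, that forces the Hardy--Littlewood machinery.) Writing $\mathcal{M}g(x)=\sup_{r,t}\frac{\int_{I^t_r(x)}g\,d\mu}{\mu(I^t_r)}$ for the maximal average appearing in the definition of $T(g,\epsilon)$, the first step is to telescope. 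Setting $g_j=|f_{j+1}-f_j|$, each $g_j$ is a simple step function with $\|g_j\|_1\le\|f-f_{j+1}\|_1+\|f-f_j\|_1\le 2^{-j+1}$, and since $\sum_j\|g_j\|_1<\infty$ we have $|f-f_m|\le\sum_{j\ge m}g_j$ almost everywhere. Hence for every $t$ and $r$ we obtain $\frac{\int_{I^t_r(x)}|f-f_m|\,d\mu}{\mu(I^t_r(x))}\le\sum_{j\ge m}\mathcal{M}g_j(x)$, so it suffices to prove that $\sum_j\mathcal{M}g_j(x)<\infty$ for every weakly pspace-random $x$; the tail of a convergent series then drives the limit to $0$.

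To control this sum I would build a pspace $\mathcal{W}$-test whose intersection captures exactly the points where $\mathcal{M}g_j(x)$ fails to be summable. For each $j$ let $S_j$ be $T(g_j,2^{-j/2})=\{x:\mathcal{M}g_j(x)>2^{-j/2}\}$, intersected with the union of open dyadic cubes of the appropriate precision so that $S_j$ is open (exactly as $S_i$ was made open in Theorem \ref{limitrandom}). By the Hardy--Littlewood inequality, $\mu(S_j)\le c\,\|g_j\|_1\,2^{j/2}\le 2c\,2^{-j/2}$. Choosing a constant $c_0$ with $\sum_{j\ge 2m+c_0}2c\,2^{-j/2}\le 2^{-m}$ and putting $U_m=\bigcup_{j\ge 2m+c_0}S_j$ yields open sets with $\mu(U_m)\le 2^{-m}$. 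If $x$ is weakly pspace-random it passes this test, so $x\notin U_{m_0}$ for some $m_0$; then $\mathcal{M}g_j(x)\le 2^{-j/2}$ for all $j\ge 2m_0+c_0$, and since each $\mathcal{M}g_j(x)\le\|g_j\|_\infty<\infty$, the full series $\sum_j\mathcal{M}g_j(x)$ converges, giving the reduction's conclusion.

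The main obstacle is showing that $\{U_m\}$ is genuinely a pspace $\mathcal{W}$-test, that is, that the sequence $\{S_j\}$ is polynomial space computable, so that Lemma \ref{comptounif} upgrades $\{U_m\}$ to a uniformly pspace approximable sequence with $U_m\subseteq\liminf_k S^k_m$. The difficulty is that $\mathcal{M}g_j$ is defined by a supremum over infinitely many scales $r'$, whereas a pspace machine can only inspect finitely many. I would resolve this by proving that the supremum is attained, up to the threshold, among polynomially many scales: since $g_j$ is constant on every cube of $\mathbf{B}_{p(j+1)}$, and since for each fixed translation $t'$ the averages over the nested cubes $I^{t'}_{r'}(x)$ form a martingale in $r'$, refining past precision $p(j+1)$ can only redistribute mass and cannot raise an average beyond a bounded factor of a value already seen at coarser scales. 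Thus $S_j$ coincides with a finite union of boxes whose endpoints lie in $\mathbf{D}^n_{q(j)}$ for some polynomial $q$, and membership is decidable in polynomial space by enumerating the scales $r'\le q(j)$ and the $3^n$ translations and summing the relevant box overlaps. The delicate point is that the translations $t'=\pm\tfrac13$ are non-dyadic, so the translated grid is misaligned with the breakpoints of $g_j$; bounding the contribution of cubes that straddle a breakpoint is where the real work lies. Once $\{S_j\}$ is shown to be pspace computable, the three test conditions follow and the theorem is complete.
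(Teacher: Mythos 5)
Your overall architecture---telescoping $|f-f_m|\le\sum_{j\ge m}|f_{j+1}-f_j|$ almost everywhere, applying the Hardy--Littlewood inequality to the maximal sets of the consecutive differences, assembling the tails of these sets into a pspace $\mathcal{W}$-test, and reading off summability of $\sum_j\mathcal{M}g_j(x)$ at any point passing the test---is essentially the paper's proof; the paper merely indexes by the pairs $(f_{2i-1},f_{2i})$ and $(f_{2i},f_{2i+1})$ with threshold $2^{-i}$ rather than by single consecutive differences with threshold $2^{-j/2}$. Your observation that the statement is trivial for a single fixed $r$ (since $\mu(I^t_r(x))$ is then a fixed constant) and that the scale-uniform version is what must actually be proved is correct and matches how the theorem is invoked later.

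There is, however, a genuine gap in the step you yourself identify as the main obstacle. The set $S_j=T(g_j,2^{-j/2})$ is \emph{not} a finite union of boxes with endpoints in $\mathbf{D}^n_{q(j)}$, and the justification offered---that for fixed $t$ the averages over the nested $I^t_{r}(x)$ form a martingale in $r$ and so ``cannot raise an average beyond a bounded factor of a value already seen at coarser scales''---is false: a martingale can increase without bound pointwise, and Doob's/Hardy--Littlewood's inequality controls its maximal function only in \emph{measure}. Concretely, with $n=1$, $g_j=\chi_{(1/2,1)}$ and $x=1/2-\delta$ for tiny $\delta>0$, every coarse-scale average at $x$ is $0$, yet a sufficiently fine translated interval containing $x$ and straddling $1/2$ has average arbitrarily close to $1$; such points enter $S_j$ only at scales $r$ that grow without bound as $\delta\to 0$, so no polynomial truncation of the scales decides membership exactly, and Lemma \ref{comptounif} (which requires exact pspace computability of $\{S_j\}$) does not apply. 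The repair, which is what the paper does, is to settle for \emph{approximability}: truncate the scales at $r\le p(2i+1)+k+2$ to obtain an approximant $A^k_i$; observe that any point of $T_i- A^k_i$ lies in a translated box at a finer scale that straddles the breakpoint grid of precision $p(2i+1)$, and bound the total measure of such boxes at scale $r$ by a quantity decreasing geometrically in $r$, giving $\mu(T_i\Delta A^k_i)\le 2^{-k-1}$; fatten the non-dyadic translated boxes to slightly larger dyadic boxes to meet the endpoint requirement; and then invoke Lemma \ref{approxtounif} rather than Lemma \ref{comptounif} to assemble the tails $U_m$ into a uniformly pspace approximable sequence. With that substitution your argument closes and coincides with the paper's.
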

\begin{proof}
Let $p$ be a polynomial, and $f \in L_1([0, 1]^n)$ be polynomial space $L_1$ computable, approximated by the polynomial space computable sequence of simple step functions $\{f_m\}_{m \in \mathbb{N}}$ at precision $p$. For every $i > 0$, define the set 
\begin{equation*}
T_i = T(f_{2i-1} - f_{2i}, 2^{-i}) \cup T(f_{2i} - f_{2i + 1}, 2^{-i}). 
\end{equation*}
For every $m \geq 1$ define the set 
\begin{equation*}
U_m = \bigcup\limits_{i = m + 4 + c}^{\infty} T_i. 
\end{equation*}
We now prove that the sequence $\{U_m\}_{m \in \mathbb{N}}$ is a pspace $\mathcal{W}$-test. Clearly, for every $m > 0$, $U_m$ is an open set. By the Hardy/Littlewood inequality,
\begin{align*}
\mu(T_i) &\leq 2^i \, c \, (\|f_{2i - 1} - f_{2i} \| + \|f_{2i} - f_{2i + 1} \| )\\
&\leq 2^i \, c \, (2^{-2i + 2} + 2^{-2i + 1}) \\
&\leq c \, 2^{-i + 3}.
\end{align*}
Using this upper bound on the measure of $T_i$ we obtain
\begin{align*}
\mu(U_m) &\leq \sum\limits_{i = m + 4 + c}^\infty \mu(T_i) \\
&\leq \sum\limits_{i = m + 4 + c}^\infty c \, 2^{-i + 3} \\
&< 2^{-m}.
\end{align*}
It remains to be shown that the sequence $\{U_m\}_{m \in \mathbb{N}}$ is uniformly polynomial space approximable. By Lemma \ref{approxtounif}, it suffices to prove that the sequence $(T_i)$ is uniformly polynomial space approximable. 
For every $k$, $i$, define the sets 
\begin{equation*}
V^k_i = \{ I^t_r \, | \, r \leq p(2i + 1) + k + 2, \, t \in \{-\frac{1}{3}, 0, \frac{1}{3}\}^n,\mbox{ and } \, \frac{\int_{I^t_r(x)} |f_{2i - 1} - f_{2i}| d\mu}{\mu(I^t_r(x))} > 2^{-i}\}, 
\end{equation*}
\begin{equation*}
W^k_i = \{ I^t_r \, | \, r \leq p(2i + 1) + k + 2, \, t \in \{-\frac{1}{3}, 0, \frac{1}{3}\}^n,\mbox{ and } \, \frac{\int_{I^t_r(x)} |f_{2i} - f_{2i + 1}| d\mu}{\mu(I^t_r(x))} > 2^{-i}\},
\end{equation*}
and
\begin{equation*}
A^k_i = W^k_i \bigcup V^k_i. 
\end{equation*}

We now show that $\mu(T_i \Delta A^k_i) \leq 2^{-k}$. Intuitively, we bound the measure using the property that simple step functions are constant on dyadic cubes. Let $I^t_r \subseteq Q$, for some $Q \in \mathbf{B}_{p(2i + 1)}$; i.e., $I^t_r$ is fully contained in an open dyadic cube of precision $p(2i + 1)$. Assume
\begin{equation*}
\frac{\int_{I^t_r} |f_{2i - 1} - f_{2i}| d\mu}{\mu(I^t_r)} > 2^{-i}. 
\end{equation*}
Since $|f_{2i - 1} - f_{2i}|$ is a simple step function whose break points are in $\mathbf{D}^n_{p(2i + 1)}$, $|f_{2i - 1} - f_{2i}|$ must be a constant function on $Q$. Thus, $|f_{2i - 1}(Q) - f_{2i}(Q)| > 2^{-i}$, and so $I^t_r \subseteq Q \subseteq A^1_i$. Similarly, if 
\begin{equation*}
\frac{\int_{I^t_r} |f_{2i} - f_{2i + 1}| d\mu}{\mu(I^t_r)} > 2^{-i},
\end{equation*}
then $I^t_r \subseteq Q \subseteq A^1_i$. So then, the set of points in $T_i - A^k_i$ must be contained in some translate $I^t_r$ that is not contained in a dyadic cube of precision $p(2i+1)$; that is,
\begin{equation}\label{approxbound}
T_i - A^k_i \subseteq \bigcup\limits_{r = p(2i + 1) + k + 3}^\infty N_r.
\end{equation}

We now bound the measure of these points. For $r \in \mathbb{N}$ define the set
\begin{equation*}
N_r = \{ I^t_r \, | \, t \in \{-\frac{1}{3}, 0, \frac{1}{3}\}^n,\mbox{ and }I^t_r \nsubseteq Q\mbox{ for any box } Q \text{ of precision }p(2i + 1)\}.
\end{equation*}
If $I^t_r$ is not contained in a dyadic cube of precision $p(2i + 1)$, then $I^t_r$ must contain at least one dyadic rational of precision $p(2i + 1)$. Hence,
\begin{equation}
|N_r| \leq 3^n 2^{n p(2i + 1)}
\end{equation}
and so,
\begin{equation}\label{measurNr}
\mu(N_r) \leq 3^n 2^{n p(2i + 1)} \, 2^{-rn}.
\end{equation}

By equation \ref{approxbound} and inequality \ref{measurNr}, we obtain
\begin{align*}
\mu(T_i - A^k_i) &\leq \mu( \bigcup\limits_{r = p(2i + 1) + k + 3}^\infty N_r ) \\
				 &\leq \sum\limits_{r = p(2i + 1) + k + 3}^\infty \mu(N_r) \\
				 &\leq \sum\limits_{r = p(2i + 1) + k + 3}^\infty 3^n \, 2^{np(2i + 1)} \, 2^{-rn} \\
				 &\leq 3^n \,2^{np(2i + 1)}  \sum\limits_{r = p(2i + 1) + k + 3}^\infty 2^{-rn} \\
				 &\leq 2^{-k - 1}.
\end{align*}

We would like $\{A^k_i\}$ to be a uniformly polynomial space computable sequence. However, there is a minor technical detail which complicates the argument. The definition of uniformly pspace computable sequences requires the endpoints to be dyadic rationals. Unfortunately, translating the dyadic cubes by $t \in \{-\frac{1}{3}, 0, \frac{1}{3}\}^n$ violates this requirement. In order to overcome this, we will approximate $\{A^k_i\}$ by boxes with dyadic endpoints. For any open cube $Q$, define $D^k_i(Q)$ to be the open dyadic box containing $Q$ such that 
\begin{equation*}
\mu(D^k_i(Q) - Q) < 2^{-n(p(2i + 1) + 2k + 3)}. 
\end{equation*}
Formally, if $Q = (a_1, b_1) \times \ldots \times (a_n, b_n)$, let
\begin{equation*}
D^k_i(Q) = (d_1, d^\prime_1) \times \ldots \times (d_n, d^\prime_n)
\end{equation*}
where $d_i$, $d^\prime_i$ are dyadic rationals at precision $p(2i + 1) + 2k + n + 3$, and $d_i \leq a_i < b_i \leq d^\prime_i$.

Define the set 
\begin{equation*}
S^k_i = \bigcup\limits_{Q \in A^k_i} D^k_i(Q).
\end{equation*}
It is easy to verify that $\{S^k_i\}$ is a uniformly pspace computable array such that the endpoints of $S^k_i$ are in $\mathbf{D}^n_{p(2i + 1) + 2k + n + 3}$, and $\mu(T_i \Delta S^k_i) \leq 2^{-k}$ for every $i$, $k > 0$. It is clear that, for every $i$ and all $x \in T_i$, $x \in \liminf_k S^k_i$. Hence, $\{T_i\}$ is a uniformly polynomial space approximable sequence, and $\{U_m\}_{m \in \mathbb{N}}$ is a pspace $\mathcal{W}$-test.

Assume $x$ is weakly pspace-random. Then there exists an $N$ such that for all $m > N$, $x \notin U_m$. Let $i > 2N + 8 + 2c$, $t \in \{-\frac{1}{3}, 0, \frac{1}{3}\}^n$ and $r > 0$. Choose $j > rn + i$. Then,
\begin{align*}
\frac{\int_{I^t_r(x)} |f-f_{i}|d\mu}{\mu(I^t_r(x))} &\leq \frac{\int_{I^t_r(x)} |f- f_{j}|d\mu}{\mu(I^t_r(x))} + \frac{\int_{I^t_r(x)} |f_j- f_{i}|d\mu}{\mu(I^t_r(x))} \\
&\leq 2^{rn} \, 2^{-j} + \frac{\int_{I^t_r(x)} |f_j- f_{i}|d\mu}{\mu(I^t_r(x))} \\
&\leq 2^{-i} + \sum\limits_{m = i}^{j - 1}\frac{\int_{I^t_r(x)} |f_m- f_{m + 1}|d\mu}{\mu(I^t_r(x))} \\
&\leq 2^{-i} + \sum\limits_{m = i}^{j - 1} 2^{-\frac{m}{2}} \\
&\leq 2^{-i} + 2^{-\frac{i}{2} + 2} \\
&< 2^{-\frac{i}{2} + 3}.
\end{align*}
Since $t \in \{-\frac{1}{3}, 0, \frac{1}{3}\}^n$ and $r > 0$ were arbitrary, 
\begin{equation*}
\lim\limits_{m \rightarrow \infty}\frac{\int_{I^t_r(x)} |f-f_{m}|d\mu}{\mu(I^t_r(x))} = 0,
\end{equation*}
for every $t \in \{-\frac{1}{3}, 0, \frac{1}{3}\}^n$ and $r > 0$.
\end{proof}
We are now able to prove that weakly pspace random points satisfy the Lebesgue differentiation theorem.
\begin{theorem}\label{theorem:WeaklySatisfyLebesgue}
If $x$ is weakly pspace-random, then for every polynomial space $L_1$-computable $f \in L_1([0, 1]^n)$, and every polynomial space computable sequence of simple functions $\{f_m\}_{m \in \mathbb{N}}$ approximating $f$,
\begin{equation*}
\lim_{m \rightarrow \infty} f_m(x) = \lim_{Q \rightarrow x} \frac{\int_Q f d\mu}{\mu(Q)}
\end{equation*}
where the limit is taken over all cubes $Q$ containing $x$ as the diameter of $Q$ tends to 0.
\end{theorem}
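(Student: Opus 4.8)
The plan is to show that the right-hand limit in the main theorem exists and equals $L := \lim_{m\to\infty} f_m(x)$, which exists by Theorem \ref{limitrandom}. By Rute's Theorem \ref{rutestheorem}, it suffices to prove that for every $t \in \{-\frac{1}{3}, 0, \frac{1}{3}\}^n$ the translate-average
\begin{equation*}
\frac{\int_{I^t_r(x)} f \, d\mu}{\mu(I^t_r(x))}
\end{equation*}
converges as $r \to \infty$; the full cube-limit, once it is known to exist, is computed along any sequence of cubes shrinking to $x$, in particular along $\{I^t_r(x)\}_r$, so it will automatically agree with the common value of these translate-limits. Thus the whole problem reduces to showing that each translate-average tends to $L$.

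To this end I would decompose, for fixed $t$, $r$, and $m$,
\begin{equation*}
\left|\frac{\int_{I^t_r(x)} f \, d\mu}{\mu(I^t_r(x))} - L\right| \leq \frac{\int_{I^t_r(x)} |f - f_m| \, d\mu}{\mu(I^t_r(x))} + \left|\frac{\int_{I^t_r(x)} f_m \, d\mu}{\mu(I^t_r(x))} - f_m(x)\right| + |f_m(x) - L|.
\end{equation*}
The third term is made small by choosing $m$ large, using the definition of $L$. For the second term, observe that since $x$ has no dyadic rational component (Lemma \ref{dyadic2random}), $x$ lies in the interior of a unique open dyadic cube $Q^\ast \in \mathbf{B}_{p(m)}$, on which $f_m$ is constant with value $f_m(x)$. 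As $r \to \infty$ the diameter of $I^t_r(x)$ tends to $0$, so for all sufficiently large $r$ (depending on $m$) we have $I^t_r(x) \subseteq Q^\ast$, and hence the second term is exactly $0$.

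The main obstacle is the first term. The stated conclusion of Theorem \ref{integralrandom} controls this average only in the regime $m \to \infty$ with $r$ fixed, whereas here I must fix $m$ and let $r \to \infty$; a naive interchange of the two limits is not justified. The way around this is that the proof of Theorem \ref{integralrandom} in fact establishes the \emph{uniform} estimate $\frac{\int_{I^t_r(x)} |f - f_m| \, d\mu}{\mu(I^t_r(x))} < 2^{-m/2 + 3}$ for \emph{all} $r > 0$ (and all $t$) once $m$ is large enough, rather than merely the pointwise-in-$r$ limit appearing in its statement. Granting this uniform bound, the argument closes cleanly: given $\epsilon > 0$, first fix $m$ so large that both $2^{-m/2 + 3} < \epsilon/2$ and $|f_m(x) - L| < \epsilon/2$; then for all $r$ large enough that $I^t_r(x) \subseteq Q^\ast$, the middle term vanishes and the displayed bound is below $\epsilon$. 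Hence $\lim_{r\to\infty} \frac{\int_{I^t_r(x)} f\, d\mu}{\mu(I^t_r(x))} = L$ for every $t$, and Theorem \ref{rutestheorem} upgrades this to the existence of the full cube-limit, which is necessarily equal to $L = \lim_{m\to\infty} f_m(x)$, the desired identity.
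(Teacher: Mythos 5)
Your proposal is correct and follows essentially the same route as the paper's proof: reduce via Theorem \ref{rutestheorem} to the translated dyadic cubes $I^t_r(x)$, split off $|f_m(x)-L|$, kill the middle term using Lemma \ref{dyadic2random} and the constancy of $f_m$ on cubes of $\mathbf{B}_{p(m)}$, and control $\frac{1}{\mu(I^t_r(x))}\int_{I^t_r(x)}|f-f_m|\,d\mu$ via Theorem \ref{integralrandom}. You are right that the bound needed is the one uniform in $r$ and $t$ established inside the proof of Theorem \ref{integralrandom} (the displayed $2^{-i/2+3}$ estimate) rather than its literal fixed-$r$ statement, and the paper's own argument invokes exactly that uniform version.
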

\begin{proof}
Let $x$ be weakly pspace-random. By Theorem \ref{rutestheorem}, it suffices to show that 
\begin{equation*}
\lim_{m \rightarrow \infty} f_m(x) = \lim_{k \rightarrow \infty} \frac{\int_{I^t_k(x)} f d\mu}{\mu(I^t_k(x))}
\end{equation*}
for all $t \in \{-\frac{1}{3}, 0, \frac{1}{3}\}^n$.

Let $\epsilon > 0$. By Theorems \ref{limitrandom} and \ref{integralrandom}, there exists an $N$ such that for all $i > N$, 
\begin{equation}\label{meq1}
|f_{i}(x) - \lim_{m \rightarrow \infty} f_m(x)| < \frac{\epsilon}{2},
\end{equation}
and
\begin{equation}\label{meq2}
\frac{\int_{I^t_k(x)} |f-f_{i}|d\mu}{\mu(I^t_k(x))} < \frac{\epsilon}{2},
\end{equation}
for every $t \in \{-\frac{1}{3}, 0, \frac{1}{3}\}^n$ and $k > 0$.
Let $i > N$. Then, using (\ref{meq1}) we obtain
\begin{equation}\label{mainlim}
|\lim_{m \rightarrow \infty} f_m(x) - \lim_{k \rightarrow \infty} \frac{\int_{I^t_k(x)} f d\mu}{\mu(I^t_k(x))}| < \frac{\epsilon}{2} + |f_i(x) - \lim_{k \rightarrow \infty} \frac{\int_{I^t_k(x)} f d\mu}{\mu(I^t_k(x))}|.
\end{equation}
By Lemma \ref{dyadic2random}, for every $r > 0$, $x \in Q$ for some $Q \in \mathbf{B}_{r}$. Since $f_i$ is a simple step function, $f_i$ is constant on every $Q \in \mathbf{B}_{p(i)}$. So there exists an $N^\prime$ so that for all $r > N^\prime$, 
\begin{equation*}\label{meq3}
f_i(x) = \frac{\int_{I^t_r(x)} f_i d\mu}{\mu(I^t_r(x))},
\end{equation*}
for every $t \in \{-\frac{1}{3}, 0, \frac{1}{3}\}^n$. Therefore, by inequality \ref{meq2}, for every $r > N^\prime$,
\begin{align}
|f_i(x) - \frac{\int_{I^t_r(x)} f d\mu}{\mu(I^t_r(x))}| &= |\frac{\int_{I^t_r(x)} f_i d\mu}{\mu(I^t_r(x))} - \frac{\int_{I^t_r(x)} f d\mu}{\mu(I^t_r(x))}|\\ 
&\leq \frac{\int_{I^t_r(x)} |f - f_i| d\mu}{\mu(I^t_r(x))}\\
&< \frac{\epsilon}{2}\label{mainintegral}.
\end{align}
Combining inequalities (\ref{mainlim}) and (\ref{mainintegral}) we have
\begin{equation*}
|\lim_{m \rightarrow \infty} f_m(x) - \lim_{k \rightarrow \infty} \frac{\int_{I^t_k(x)} f d\mu}{\mu(I^t_k(x))}| < \epsilon.
\end{equation*}
Since $\epsilon$ was arbitrary, the proof is complete.
\end{proof}

\subsection{Non-random points are not Lebesgue points}
We now show that converse of our main theorem holds. That is, we show that if a point $x$ is not weakly pspace random, the limit $\lim\limits_{Q \rightarrow x} \frac{1}{\mu(Q)}\int_Q fd\mu$ does not exist. Our approach is largely similar from the construction of Pathak, et al \cite{PRS}. However, due to the restriction of polynomial space computation, the implementation is significantly different. To adapt the construction of Pathak et al, we first introduce a notion that will partition a pspace $\mathcal{W}$-test $\{U_m\}$ into a tree of dyadic cubes. 

\begin{definition}\label{dyadicTreeDef}
A \textit{dyadic tree decomposition of} $[0, 1]^n$ is a tree $\mathbf{T}$ of dyadic cubes rooted at $[0, 1]^n$ such that the following hold:
\begin{enumerate}
\item The children of any cube $Q \in \mathbf{T}$ are subsets of $Q$.
\item For any two cubes $Q_1, Q_2 \in \mathbf{T}$, either $Q_1$ and $Q_2$ are disjoint, or one contains the other.
\item For any cube $Q \in \mathbf{T}$, 
\begin{equation*}
\mu(\bigcup\limits_{B \in Child(Q)} V) < \frac{\mu(Q)}{4}.
\end{equation*}
\end{enumerate}

A dyadic tree decomposition $\mathbf{T}$ is \textit{polynomial space approximable} if there exists a polynomial $p$ and uniformly pspace computable array $\{T^k_m\}_{k,m \in \mathbb{N}}$ such that the following hold.
\begin{enumerate}
\item For every $k, m \in \mathbb{N}$, $T^k_m$ is a finite union of disjoint dyadic cubes.
\item For every $m \in \mathbb{N}$, $T^k_m$ approximates the $m$th level of $T$ to within $2^{-k}$, that is, $\mu(Level_{m}(\mathbf{T}) \Delta T^k_m) < 2^{-k}$.
\end{enumerate}
\end{definition}

The following technical lemma will be used to show that every pspace $\mathcal{W}$-test admits a pspace approximable dyadic tree decomposition.
\begin{lemma}\label{lemma:NiceArray}
For every uniformly pspace computable array $\{R^k_m\}_{k, m \in \mathbb{N}}$, there exists a uniformly pspace computable array $\{S^k_m\}_{k, m \in \mathbb{N}}$ such that 
\begin{enumerate}
\item For every $m, k$, $\mu(\cup_{i \leq k} R^i_m \Delta \cup_{i\leq k} S^i_m) = 0$, and
\item For every $m$, $\cup_k S^k_m$ is a set of disjoint open dyadic cubes.
\end{enumerate}
\end{lemma}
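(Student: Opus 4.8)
The plan is to form each $S^k_m$ out of the dyadic cubes that \emph{first} enter the cumulative union at level $k$, cut at a single common precision so that disjointness across levels comes for free. Write $U^k_m := \bigcup_{i \le k} R^i_m$ and $U^0_m := \emptyset$. Since the witnessing machine $M$ runs in polynomial space, there is an increasing polynomial $p$ bounding the precision of every endpoint appearing in $R^i_m$ by $P_i := p(m+i)$, so for fixed $m$ each of $U^0_m, \dots, U^k_m$ is a finite union of open boxes with endpoints in $\mathbf{D}^n_{P_k}$. I would then set
\[ \mathcal{C}^k_m = \{\, C \in \mathbf{B}_{P_k} \; : \; \mathrm{center}(C) \in U^k_m \ \text{ and } \ \mathrm{center}(C) \notin U^{k-1}_m \,\}, \qquad S^k_m = \bigcup_{C \in \mathcal{C}^k_m} C. \]
Using the center of $C$ (a point of precision $P_k+1$) keeps the two membership conditions unambiguous, because a center never lands on any grid of precision $\le P_k$, so it is strictly inside or strictly outside each $U^i_m$, $i\le k$.

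The verification of the two required properties rests on one structural fact: the grid of precision $P_i$ refines into the grid of precision $P_k$ whenever $i \le k$, so an open cube of precision $P_k$ never straddles the boundary of a coarser box. Consequently $\mathrm{center}(C)\in U^i_m$ forces $C \subseteq U^i_m$, and $\mathrm{center}(C)\notin U^i_m$ forces $C \cap U^i_m = \emptyset$, for every $i \le k$. For disjointness, take $C\in\mathcal{C}^k_m$ and $C'\in\mathcal{C}^j_m$ with $k<j$: then $C\subseteq U^k_m \subseteq U^{j-1}_m$, while $\mathrm{center}(C')\notin U^{j-1}_m$ gives $C'\cap U^{j-1}_m=\emptyset$, so $C\cap C'=\emptyset$; within a single level the cubes share the precision $P_k$ and are trivially disjoint. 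This yields property (2). For property (1), every point $x\in U^k_m$ off the grids has a least index $i_0$ with $x\in U^{i_0}_m$, and the precision-$P_{i_0}$ cube containing $x$ lies in $\mathcal{C}^{i_0}_m$ by the refinement fact; hence $U^k_m \subseteq \bigcup_{i\le k} S^i_m$ modulo a measure-zero set of grid points, and the reverse inclusion holds since $S^i_m\subseteq U^i_m$, giving $\mu(\bigcup_{i\le k}R^i_m \,\Delta\, \bigcup_{i\le k}S^i_m)=0$.

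For uniform pspace computability I would classify an input $d$ exactly as in Lemma~\ref{comptounif}: determine the relevant cubes of $\mathbf{B}_{P_k}$ (the unique one containing $d$ if $d$ is off the precision-$P_k$ grid, or the at most $2^n$ cubes whose closure contains $d$ otherwise), and for each such cube $C$ test its center for membership in $U^k_m$ and $U^{k-1}_m$ by running $M$ on $\mathrm{center}(C)$ for all $i\le k$ and $i\le k-1$ respectively. Output $1$ if $d$ is interior to some $C\in\mathcal{C}^k_m$, $-1$ if $d$ lies on the boundary of such a $C$ but is interior to none, and $0$ otherwise; this is a polynomial-space computation since $P_k$ is polynomially bounded and only $O(k)$ calls to $M$ on points of polynomial precision are needed. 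The main obstacle to keep in mind is the tension between the two requirements: matching the partial unions $\bigcup_{i\le k}R^i_m$ at \emph{every} level forbids re-decomposing each cumulative union into maximal cubes (maximal decompositions at consecutive levels can merge, e.g. $(0,1)$ absorbed into $(0,2)$, destroying global disjointness), while global disjointness forbids simply reusing the overlapping $R^i_m$; the ``new cubes at a common precision'' difference construction is precisely what reconciles the two, and the refinement-of-grids observation is the lemma that makes it go through.
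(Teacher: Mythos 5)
Your construction is correct and is essentially the paper's own: the paper sets $S^k_m = R^k_m - A^k_m$, where $A^k_m$ is $\bigcup_{i<k}R^i_m$ re-expressed as dyadic cubes of precision $p(k+m)$, and this coincides up to measure zero with your collection of precision-$p(m+k)$ cubes newly covered at stage $k$. Your center-point membership test and the grid-refinement observation are simply a more explicit implementation of the same ``new cubes at a common precision'' idea, including the same disjointness and measure-zero verifications.
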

\begin{proof}
We can, and do assume that, for every $k, m$, $R^k_m$ is a finite union of disjoint open dyadic cubes, whose endpoints are dyadic rationals at precision $p(k + m)$. For every $m$, define $S^1_m = R^1_m$. Let $m \in \mathbb{N}$ and $k > 1$. Define the set 

\begin{equation*}
A^k_m = \{Q \in \mathbf{B}_{p(k + m)} \, | \, (\exists i < k) \, Q \subseteq B \text{ where } B \in R^i_m\}.
\end{equation*}
That is, $A^k_m$ is the set of all cubes in $\cup_{i < k} R^i_m$ broken into dyadic cubes of precision $p(k+m)$. Define $S^k_m = R^k_m - A^k_m$. 

It is clear that $\{S^k_m\}_{k, m \in \mathbb{N}}$ satisfies both properties of the lemma. Note that $\{A^k_m\}_{m \in \mathbb{N}, k> 1}$ is a uniformly pspace computable array. It therefore follows that $\{S^k_m\}_{k, m \in \mathbb{N}}$ is pspace computable.
\end{proof}

We now show that every pspace $\mathcal{W}$-test admits a pspace approximable dyadic tree decomposition. We build the tree inductively, using the uniformly pspace computable sequence of the previous lemma. 
\begin{lemma}\label{lemma:WtestHasDyadicTree}
Let $\{U_m\}_{m\in \mathbb{N}}$ be a pspace $\mathcal{W}$-test. Then there exists a pspace approximable dyadic tree decomposition $\mathbf{T}$ such that, for every non-dyadic $x \in \bigcap U_m$, $x$ is contained in an infinite path in $\mathbf{T}$.
\end{lemma}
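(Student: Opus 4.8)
The plan is to build $\mathbf{T}$ by induction on the level, using the measure bound $\mu(U_m)\le 2^{-m}$ to force the one‑quarter shrinkage of condition (3), and using $x\in\bigcap_m U_m$ only to guarantee survival to every level. Starting from the root $[0,1]^n$, given a cube $Q\in\mathbf{T}$ of precision $r$, I would declare the children of $Q$ to be the maximal open dyadic subcubes of $Q\cap U_{c(r)}$, where $c(r)=nr+3$. Two facts make this choice work. First, $c(r)>nr$ gives $\mu(U_{c(r)})\le 2^{-c(r)}<2^{-nr}=\mu(Q)$, so $U_{c(r)}$ cannot contain $Q$ and the children are proper subcubes; the path never stalls. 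Second, $c(r)>nr+2$ gives $\mu\!\big(\bigcup_{B\in Child(Q)}B\big)=\mu(Q\cap U_{c(r)})\le 2^{-nr-3}<\mu(Q)/4$, which is condition (3). Conditions (1) and (2) hold automatically because every node is a dyadic cube.

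For the covering claim I argue by induction along a path. Let $x\in\bigcap_m U_m$ be non‑dyadic, and suppose $x$ lies in the level‑$m$ cube $Q$ of precision $r$. Since $x\in\bigcap_m U_m\subseteq U_{c(r)}$ and $Q\cap U_{c(r)}$ is open, the nested dyadic cubes shrinking to $x$ are eventually contained in $Q\cap U_{c(r)}$, so $x$ lies in the interior of the maximal such cube, i.e.\ in a child of $Q$. Thus $x$ descends to level $m+1$, and the nested cubes containing $x$ form an infinite path, as required.

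The substantial part is to verify that $\mathbf{T}$ is polynomial space approximable. The natural approximation of $Level_m(\mathbf{T})$ replaces each set $U_c$ appearing in the construction by the computable box union $S^{k'}_c$, whose measure error is $2^{-k'}$, breaks the outcome into disjoint open dyadic cubes by Lemma \ref{lemma:NiceArray}, and sets $T^k_m$ equal to this finite family; the error $\mu(Level_m(\mathbf{T})\,\Delta\,T^k_m)$ is then bounded by summing the substitution errors, while uniform computability is assembled as in Lemmas \ref{comptounif} and \ref{approxtounif}.

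The main obstacle, and the point where polynomial space (rather than mere computability) bites, is precision control. Because the index $c(r)=nr+3$ grows with the precision of the parent, a na\"ive recursive evaluation of level $m$ feeds the output precision of each stage into the next, composing the approximation polynomial $p$ with itself $m$ times and inflating the precision to roughly $p^{(m)}$, which is not polynomial in $m+k$. The real work is therefore to give a non‑recursive description of (an approximation to) $Level_m(\mathbf{T})$, expressing membership at level $m$ as a bounded, length‑$m$ conjunction of conditions of the form $x\in S^{k'}_{c_j}$ with the indices $c_j$ fixed in advance rather than read off the evolving cube sizes, so that the largest precision occurring is $\max_j p(c_j+k')=p(\mathrm{poly}(m,k))$ and all endpoints land in $\mathbf{D}^n_{p'(m+k)}$ for a single polynomial $p'$. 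Reconciling such a uniform description with the cube‑size‑dependent index that condition (3) demands is the crux; I expect to resolve it by bounding the measure carried by the high‑precision cubes at each level and discarding it into the $2^{-k}$ approximation error, so that only polynomially many precisions must actually be examined.
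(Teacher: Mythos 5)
Your tree is built from the exact open sets: the children of a precision-$r$ cube $Q$ are the maximal open dyadic subcubes of $Q\cap U_{c(r)}$. The tree so defined does satisfy conditions (1)--(3), and your descent argument for non-dyadic $x\in\bigcap_m U_m$ is fine. The gap is exactly where you flag it --- pspace approximability of the levels --- and the repair you sketch does not work. You propose to discard the measure carried by high-precision level cubes into the $2^{-k}$ error budget, so that only polynomially many indices $c(r_Q)$ need be consulted at the next level. But pspace approximability of $U_3$ does not bound that measure: take $U_3$ to be a box of measure $\approx 2^{-3}$ minus a closed set of measure $\eta=2^{-2^{100}}$ consisting of thin slabs through the center of every dyadic cube of precision at most $R$, with $R$ enormous. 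The approximations $S^{k}_3$ only need to resolve the slabs once $2^{-k}<\eta$, at which point precision $p(3+k)$ suffices, so this is a legitimate $\mathcal{W}$-test component; yet \emph{every} maximal dyadic subcube of $U_3$ has precision exceeding $R$, so essentially \emph{all} of the measure of $Level_1(\mathbf{T})$ sits on cubes whose associated index $c(r_Q)$ is astronomically large, while $\mu(Level_2(\mathbf{T}))$ is still about $\mu(Level_1(\mathbf{T}))/8$ and hence cannot be discarded. A second, related problem: even identifying which level-$1$ cube contains a query point requires the \emph{exact} maximal dyadic subcube of $U_3$; substituting $S^{k'}_3$ changes the maximal cubes, hence $r_Q$, hence the index $c(r_Q)$, and since the $U_m$ of a $\mathcal{W}$-test are not nested, querying the wrong index is not an approximation in any sense.

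The paper's construction differs precisely at this point, and the difference is what makes the lemma go through: the tree nodes are never taken from the sets $U_m$ but from the approximating array itself. One first disjointifies $\{R^k_m\}$ across $k$ (Lemma \ref{lemma:NiceArray}) to obtain $\{S^k_m\}$, and the children of $Q$ are the cubes of $\bigcup_k S^k_m$ contained in $Q$, with $m$ chosen from $\mu(Q)$ essentially as your $c(r)$. The precision of a child is then tied to the approximation index $k$ it came from, and the high-index tail of each level is \emph{automatically} negligible by the telescoping bound
\begin{equation*}
\mu\Bigl(\bigcup_{i\ge k+3} S^i_j\Bigr)\;\le\;\sum_{i\ge k+3}\bigl(\mu(U_j\,\Delta\,R^i_j)+\mu(U_j\,\Delta\,R^{i-1}_j)\bigr)\;\le\;2^{-(j+k)},
\end{equation*}
so truncating at $i\le k+3$ yields the finite arrays $T^k_m$ with $\mu(Level_m(\mathbf{T})\,\Delta\,T^k_m)<2^{-k}$. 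This quantitative control of the high-precision part of a level is exactly what your construction lacks and, by the example above, cannot have. To close the gap you would need to rebuild the tree out of the cubes $S^k_m$ themselves, which is the paper's proof.
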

\begin{proof}
Let $\{U_m\}_{m\in \mathbb{N}}$ be a pspace $\mathcal{W}$-test. Let $\{R^k_m\}_{k, m \in \mathbb{N}}$ be a uniformly pspace computable array approximating $\{U_m\}_{m\in \mathbb{N}}$. We can and do assume that for all $k, m \in \mathbb{N}$, $\mu(U_m \Delta R^k_m) < 2^{-(k + m)}$. Let $\{S^k_m\}_{k, m \in \mathbb{N}}$ be the uniformly pspace computable array of obtained from $\{R^k_m\}$ satisfying the properties of Lemma \ref{lemma:NiceArray}. For every $m$, define the set
\begin{equation*}
S_m = \{Q \, | \, Q \in S^k_m \text{ for some } k \geq 1\}.
\end{equation*}

We define the dyadic tree decomposition $\mathbf{T}$ inductively. Define the first level of $\mathbf{T}$ to be
\begin{equation*}
Level_1(\mathbf{T}) = S_1.
\end{equation*}
For $i > 1$, define level $i$ as follows. For every $Q \in Level_{i - 1}(\mathbf{T})$, let $m \in \mathbb{N}$ be the smallest integer such that $2^{-m} < \frac{\mu(Q)}{8}$. Define the set
\begin{equation*}
Child(Q) = \{ B \, | \, B \in S_m \text{ and } B \subseteq Q\}.
\end{equation*}
Finally, define the $i$th level to be
\begin{equation*}
Level_i(\mathbf{T}) = \bigcup\limits_{Q \in Level_{i - 1}(\mathbf{T})} Child(Q).
\end{equation*}

We now prove that $\mathbf{T}$ is a dyadic tree decomposition of $[0, 1]^n$. By our construction of $\mathbf{T}$, it is clear that for any $Q \in \mathbf{T}$, the children of $Q$ are subsets of $Q$. We prove item (2) of definition \ref{dyadicTreeDef} by induction on the level of the tree. For the root $[0,1]^n$, the claim is immediate. Let $i > 0$. Let $Q_1, Q_2$ be dyadic cubes at level $i$. If $Q_1$ and $Q_2$ have different parents, the claim holds by our inductive hypothesis. Assume that $Q_1$ and $Q_2$ have the same parent. Then $Q_1, Q_2 \in \cup_k S^k_m$ for some $m \in \mathbb{N}$, and therefore $Q_1$ and $Q_2$ are disjoint. Let $Q \in \mathbf{T}$ and $m$ be the smallest integer such that $2^{-m} < \frac{\mu(Q)}{8}$. By the construction of $\mathbf{T}$, 
\begin{align*}
\mu(\bigcup\limits_{B \in Child(Q)} B) &\leq \mu(\cup_{k\geq 1} S^k_m)\\
&\leq \sum\limits^\infty_{k = 1} \mu(R^k_m)\\
&\leq \mu(U_m) + \mu(\bigcup_{k \geq 1} U_m \Delta R^k_m) \\
&\leq 2^{-m} + \sum\limits^\infty_{k = 1} 2^{-(k + m)} \\
&= 2^{-m + 1} \\
&\leq \frac{\mu(Q)}{4}
\end{align*}

We now show that $\mathbf{T}$ is pspace approximable. We define the array $\{T^k_m\}_{k, m \in \mathbb{N}}$ inductively on $m$. For $m = 1$, set 
\begin{equation*}
T^k_1 = \bigcup\limits_{i = 1}^k S^i_1.
\end{equation*}
Let $m > 1$ and $k \in \mathbb{N}$. For every $Q \in T^k_{m - 1}$, let $j \in \mathbb{N}$ be the smallest integer such that $2^{-j} < \frac{\mu(Q)}{8}$. Define the set
\begin{equation*}
C^k_Q = \{ B \in Child(Q)\, | \, B \in S^i_j \text{ for some } i \leq k + 3\}.
\end{equation*} 
Since, 
\begin{align*}
\mu(\bigcup\limits_{i = k + 3}^\infty S^i_j) &\leq \sum\limits_{i = k + 3}^\infty \mu(S^i_j)\\
&\leq \sum\limits_{i = k + 3}^\infty \mu(R^i_j \Delta R^{i - 1}_j)\\
&\leq \sum\limits_{i = k + 3}^\infty \mu(U_j \Delta R^i_j) + \mu(U_j \Delta R^{i - 1}_j)\\
&\leq \sum\limits_{i = k + 3}^\infty 2^{-(j + i)} + 2^{-(j + i - 1)}\\
&\leq 2^{-(j + k)},
\end{align*}
we have
\begin{align*}
\mu(Child(Q) - C^k_Q) &\leq 2^{-(j + k + 2)}\\ 
&\leq \frac{\mu(Q)}{8} 2^{-k}.
\end{align*}
Finally, define 
\begin{equation*}
T^k_m = \bigcup\limits_{Q \in T^k_{m - 1}} C^k_Q.
\end{equation*}

We now show that $\{T^k_m\}_{k, m \in \mathbb{N}}$ approximates $\mathbf{T}$ by induction on the level $m$. It is clear that for all $k$, $\mu(Level_1(\mathbf{T}) \Delta T^k_1) < 2^{-k}$. Let $k, m \in \mathbb{N}$. Define the set 
\begin{equation*}
N = \{Q \, | \, Q \in Level_{m-1}(\mathbf{T}) - T^k_{m-1}\}. 
\end{equation*}
Then,
\begin{align*}
\mu(Level_m(\mathbf{T}) \Delta T^k_m) &= \mu(\bigcup\limits_{Q \in T^k_{m-1}} Child(Q) \Delta T^k_m) + \mu(\bigcup\limits_{Q \in N} Child(Q))\\
&\leq \sum\limits_{Q \in T^k_{m-1}}\mu(Child(Q) - C^k_Q) + \sum\limits_{Q \in N}\mu( Child(Q))\\
&\leq \sum\limits_{Q \in T^k_{m-1}}(\frac{\mu(Q)}{8} 2^{-k }) + 2^{-(k + 3)}\\
&\leq 2^{-k}.
\end{align*}
Since $\{S^k_m\}_{k, m \in \mathbb{N}}$ is pspace computable, $\{T^k_m\}_{k, m \in \mathbb{N}}$ is a uniformly pspace computable array. Hence $\mathbf{T}$ is a pspace approximable dyadic tree decomposition.

Let $x = (x_1, \ldots, x_n) \in \cap_{m \geq 1} U_m$ be a point so that $x_i$ is not a dyadic rational. We prove that there is an infinite path in $\mathbf{T}$ containing $x$ by induction on the level of $\mathbf{T}$. By the definition of pspace $\mathcal{W}$-tests, it is clear that there exists a dyadic cube $Q$ in $S_1$ such that $x \in Q$. Hence $Q \in Level_1(\mathbf{T})$. Let $i > 1$. By our inductive hypothesis, there exists a dyadic rational cube $Q \in Level_{i - 1}(\mathbf{T})$ containing $x$. Let $m$ be the smallest integer such that $2^{-m} < \frac{\mu(Q)}{8}$. Since there exists a dyadic cube $Q \in S_m$ containing $x$, the conclusion follows. 
\end{proof}

We are now able to prove the converse of Theorem \ref{theorem:WeaklySatisfyLebesgue}, thereby completing the proof of our main theorem. The proof of this theorem involves constructing a function that takes advantage of the dyadic tree decomposition of a pspace $W$-test succeeding on $x$. We construct the function so that it assigns different values to alternating levels of the tree. As we are guaranteed that $x$ is in an infinite path of the tree, the function oscillates around $x$.
\begin{theorem}\label{theorem:MainThmConverse}
If $x \in [0, 1]^n$ is not weakly pspace random, then there exists a pspace $L_1$ computable function $f$ such that the limit $\lim\limits_{Q \rightarrow x} \frac{1}{\mu(Q)}\int_Q fd\mu$ does not exist.
\end{theorem}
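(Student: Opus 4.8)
The plan is to split on whether $x$ has a dyadic coordinate. Suppose first that some coordinate $x_j$ is a dyadic rational. Then I would simply take $f = \chi_H$, where $H = \{y \in [0,1]^n : y_j < x_j\}$; this is itself a simple step function with a dyadic breakpoint, hence trivially pspace $L_1$-computable (take $f_m = f$ for all $m$). Since $x_j$ lies exactly on the discontinuity, I can choose open cubes $Q \ni x$ shrinking to $x$ that lie almost entirely in $\{y_j > x_j\}$ (forcing the average toward $0$) and others lying almost entirely in $\{y_j < x_j\}$ (forcing the average toward $1$); thus $\lim_{Q \to x}\frac{1}{\mu(Q)}\int_Q f\,d\mu$ does not exist. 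By the contrapositive of Lemma \ref{dyadic2random}, the remaining case is that no coordinate of $x$ is dyadic, which I treat next.

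So assume $x$ is non-dyadic. Since $x$ is not weakly pspace random it fails some pspace $\mathcal{W}$-test $\{U_m\}$, i.e. $x \in \bigcap_m U_m$. By Lemma \ref{lemma:WtestHasDyadicTree} there is a pspace approximable dyadic tree decomposition $\mathbf{T}$ together with an infinite path $Q_1 \supset Q_2 \supset \cdots$ through $\mathbf{T}$ with $x \in Q_\ell$ and $Q_\ell \in Level_\ell(\mathbf{T})$ for every $\ell$. For a node $Q \in \mathbf{T}$ write $C(Q) = \bigcup_{B \in Child(Q)} B$, and define the \emph{ring} at level $m$ by $R_m = Level_m(\mathbf{T}) \setminus Level_{m+1}(\mathbf{T}) = \bigcup_{Q \in Level_m}\bigl(Q \setminus C(Q)\bigr)$. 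I would then define the oscillating function
\[
f = \sum_{m \ge 1} a_m \chi_{R_m}, \qquad a_m = 1 \ (m \text{ odd}), \quad a_m = 0 \ (m \text{ even}).
\]
Since the $R_m$ are disjoint and $a_m \in \{0,1\} \subseteq \mathbf{D}$, this is a well-defined measurable function with $0 \le f \le 1$.

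Next I would verify that $f$ is pspace $L_1$-computable. Let $\{T^k_m\}$ be the uniformly pspace computable array approximating the levels of $\mathbf{T}$ from Definition \ref{dyadicTreeDef}. For a target precision $M$, set $k_M = \mathrm{poly}(M)$ and form the \emph{nested} sets $\bar G_m = \bigcup_{j=m}^{M} T^{k_M}_j$, which decrease in $m$ (with $\bar G_{M+1} = \emptyset$), giving the simple step function $f_M = \sum_{m=1}^{M} a_m\bigl(\chi_{\bar G_m} - \chi_{\bar G_{m+1}}\bigr)$. Because $\bar G_m$ is decreasing, $f_M$ is a genuine step function, disjointly supported on dyadic boxes of polynomial precision and evaluable by a pspace TM making polynomially many calls to the machine for $\{T^k_m\}$. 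For the $L_1$ bound I would use property (3) of the decomposition, which gives $\mu(Level_{m+1}) < \mu(Level_m)/4$ and hence $\mu(Level_{M+1}) < 4^{-M}$; this controls the truncation tail $\bigl\|\sum_{m>M} a_m\chi_{R_m}\bigr\|_1$, while the accumulated per-level error $\sum_{m\le M}\mu(Level_m \,\Delta\, \bar G_m)$ is forced below $2^{-M-1}$ by taking $k_M$ large, so that $\|f - f_M\|_1 \le 2^{-M}$.

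It remains to show the integral limit fails at $x$. Along the path, since $Q_\ell \cap R_m = \varnothing$ for $m < \ell$, $R_\ell \cap Q_\ell = Q_\ell \setminus C(Q_\ell)$, and $\bigcup_{\ell' > \ell} R_{\ell'} \cap Q_\ell \subseteq C(Q_\ell)$, one computes
\[
\int_{Q_\ell} f \,d\mu = a_\ell\,\mu\bigl(Q_\ell \setminus C(Q_\ell)\bigr) + \sum_{\ell' > \ell} a_{\ell'}\,\mu\bigl(R_{\ell'} \cap Q_\ell\bigr),
\]
where the second term lies in $[0, \mu(C(Q_\ell))]$ and $\mu(C(Q_\ell)) < \mu(Q_\ell)/4$. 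Hence the average $\frac{1}{\mu(Q_\ell)}\int_{Q_\ell} f\,d\mu$ is $\ge 3/4$ when $\ell$ is odd and $< 1/4$ when $\ell$ is even; as $\mu(Q_\ell) < 4^{-(\ell-1)} \to 0$, these open cubes contain $x$ and shrink to it, so the averages oscillate and $\lim_{Q \to x}\frac{1}{\mu(Q)}\int_Q f\,d\mu$ does not exist. (Note the value of $\lim_m f_m(x)$ is irrelevant: failure of the integral limit alone contradicts the Main Theorem.) The \emph{main obstacle} is precisely the pspace $L_1$-computability: whereas the classical proof may enumerate the tree exactly, here one has only the approximations $T^k_m$, and the delicate work lies in forcing the approximate levels to nest so that each $f_M$ is a legitimate simple step function, and in balancing the truncation level $M$ against $k_M$ to simultaneously control the geometric tail and the accumulated approximation error within polynomial space. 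The oscillation itself is then immediate from the structural bound $\mu(C(Q)) < \mu(Q)/4$.
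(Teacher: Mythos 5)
Your proposal is correct and follows essentially the same route as the paper's own proof: the same split on whether $x$ has a dyadic coordinate, the same invocation of Lemma \ref{lemma:WtestHasDyadicTree} to obtain a pspace approximable dyadic tree decomposition with an infinite path through $x$, the same alternating $0/1$ function on the rings $Q \setminus \bigcup_{B \in Child(Q)} B$ (with the parity convention merely swapped), the same truncate-and-approximate argument for pspace $L_1$-computability via the array $\{T^k_m\}$, and the same oscillation bound from $\mu(\bigcup_{B\in Child(Q)}B) < \mu(Q)/4$. The only differences are cosmetic details in how the approximating step functions are packaged.
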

\begin{proof}
We first assume that $x = (x_1, \ldots, x_n)$ so that some component $x_i$ of $x$ is a dyadic rational. Without loss of generality assume that $x_1 = d \in \mathbf{D}$. Define the function $f: [0 ,1]^n \rightarrow \mathbb{R}$ to be
\begin{equation*}
f(y) = \begin{cases}
1 &\text{ if } y \in [0, d] \times [0, 1] \times \ldots \times [0,1]\\
0 &\text{ otherwise}
\end{cases}
\end{equation*}
It is clear that $f$ is pspace $L_1$-computable, and that the limit $\lim\limits_{Q \rightarrow x} \frac{1}{\mu(Q)}\int_Q fd\mu$ does not exist.

Assume that $x = (x_1, \ldots, x_n)$ so that $x_i$ is not a dyadic rational for all $i \leq n$. Let $\{U_m\}_{m\in \mathbb{N}}$ be a pspace $\mathcal{W}$-test succeeding on $x$. Let $\mathbf{T}$ be a pspace computable dyadic tree partition of $\{U_m\}_{m\in \mathbb{N}}$ given by Lemma \ref{lemma:WtestHasDyadicTree}. Define $f:[0, 1]^n \rightarrow \mathbb{R}$ as follows. For every $Q \in \mathbf{T}$,
\begin{align*}
f(Q - \bigcup\limits_{B \in Child(Q)} B) = \begin{cases}
1 &\text{ if the level of } Q \text{ in } \mathbf{T} \text{ is even}\\
0 &\text{ if the level of } Q \text{ in } \mathbf{T} \text{ is odd}
\end{cases}
\end{align*}

We now show that $f$ is pspace $L_1$-computable. Let $\{T^k_m\}_{k, m \in \mathbb{N}}$ be the uniformly pspace computable array approximating $\mathbf{T}$. For every $m \in \mathbb{N}$, define
\begin{equation*}
T_m = \{Q \, | \, Q \in T^{m+2}_i \text{ for some } i \leq m\}. 
\end{equation*}
For every $Q \in T_m$, let $C^k_Q = \{B \in Child(Q) \, | \, B \in T_{m}\}$. For every $m \in \mathbb{N}$, define $f_m:[0, 1]^n \rightarrow \mathbb{R}$ as follows.
\begin{align*}
f_m(Q - \bigcup\limits_{B \in C^k_Q} B) = \begin{cases}
1 &\text{ if the level of } Q \text{ in } \mathbf{T} \text{ is even}\\
0 &\text{ if the level of } Q \text{ in } \mathbf{T} \text{ is odd}
\end{cases}
\end{align*}
It is clear that $f_m$ is a simple step function. On input $(0^m, d)$, the machine computes the sequence of descending dyadic cubes in $T_m$ containing $d$, and outputs appropriately. Since $\{T^k_m\}$ is pspace computable, computing this sequence can be done in polynomial space, and $\{f_m\}$ is pspace computable.

We now prove that $\{f_m\}_{m\in \mathbb{N}}$ approximates $f$. Let $m \in \mathbb{N}$, and
\begin{equation*} 
N = \{Q \, | \, Q \in \mathbf{T} - \cup_{i = 1}^m T^{m+2}_i\}. 
\end{equation*}
Then,
\begin{align*}
\|f - f_m\|_1 &= \int_0^1 |f - f_m| \\
&= \int_Q |f - f_m| \\
&\leq \mu(N)\\
&\leq \sum\limits_{i = 1}^m \mu(Level_i(\mathbf{T}) - T^{m+1}_i) + \sum\limits_{i = m + 1}^\infty \mu(Level_i(\mathbf{T}))\\
&\leq 2^{-(m + 1)} + 2^{-(m + 1)}\\
&\leq 2^{-m}.
\end{align*}
Hence $f$ is a pspace $L_1$ computable function.

Finally, we show that the limit $\lim\limits_{Q \rightarrow x} \frac{1}{\mu(Q)}\int_Q fd\mu$ does not exist. Let $N \in \mathbb{N}$. By Lemma \ref{lemma:WtestHasDyadicTree}, $x$ is contained in an infinite path of $\mathbf{T}$. Choose a dyadic cube $Q \in \mathbf{T}$ containing $x$ such that $\mu(Q) < 2^{-N}$ and the level of $Q$ in $\mathbf{T}$ is even. Then, by our construction of $f$,
\begin{align*}
\frac{1}{\mu(Q)}\int_Q fd\mu &\geq \frac{1}{\mu(Q)}\int_{Q - Child(Q)} 1d\mu\\
&= \frac{1}{\mu(Q)}\mu(Q - Child(Q))\\
&\geq \frac{3}{4}.
\end{align*}
Similarly, choose a dyadic cube $Q \in \mathbf{T}$ containing $x$ such that $\mu(Q) < 2^{-N}$ and the level of $Q$ in $\mathbf{T}$ is odd. Then, by construction of $f$,
\begin{align*}
\frac{1}{\mu(Q)}\int_Q fd\mu &\leq \frac{1}{\mu(Q)}\int_{Child(Q)} 1d\mu\\
&= \frac{1}{\mu(Q)}\mu(Child(Q))\\
&\leq \frac{1}{4}.
\end{align*}
Hence the limit $\lim\limits_{Q \rightarrow x} \frac{1}{\mu(Q)}\int_Q fd\mu$ does not exist.
\end{proof}

Therefore, by Theorems \ref{theorem:WeaklySatisfyLebesgue} and \ref{theorem:MainThmConverse}, the Lebesgue differentiation theorem characterizes weakly pspace randomness.

\section{Conclusion and Open Problems}
In the computable setting, there is a strong connection between randomness and classical theorems of analysis. However, this interaction is not as well understood in the context of resource-bounded randomness. An interesting direction is to characterize randomness for different computational resource bounds using the Lebesgue differentiation theorem. For example, what notion of polynomial time randomness is characterized by the Lebesgue differentiation theorem?

We believe the notion of weakly polynomial space randomness will be useful in further investigations into resource-bounded randomness in analysis. An interesting avenue of future research is to relate weakly pspace-randomness with other notions of polynomial space randomness. We showed that Lutz's definition of pspace-randomness implies weakly pspace randomness, but the converse is not known. We conjecture that weakly pspace randomness is strictly weaker than Lutz's notion of pspace-randomness.

\subparagraph{Acknowledgments}
We thank Adam Case and Jack Lutz for useful discussions.

\end{document}